\DeclarePairedDelimiterX{\set}[1]{\{}{\}}{\setargs{#1}}
\NewDocumentCommand{\setargs}{>{\SplitArgument{1}{;}}m}
{\setargsaux#1}
\NewDocumentCommand{\setargsaux}{mm}
{\IfNoValueTF{#2}{#1} {#1\,\delimsize|\,\mathopen{}#2}}
\DeclarePairedDelimiter\abs{\lvert}{\rvert}
\DeclarePairedDelimiter\ceil{\lceil}{\rceil}
\DeclarePairedDelimiter\floor{\lfloor}{\rfloor}
\DeclarePairedDelimiter\parenv{\lparen}{\rparen}
\newcommand{\cB}{\mathcal{B}}
\newcommand{\cG}{\mathcal{G}}
\newcommand{\cS}{\mathcal{S}}
\newcommand{\bE}{\mathbb{E}}
\renewcommand{\le}{\leqslant}
\renewcommand{\leq}{\leqslant}
\renewcommand{\ge}{\geqslant}
\renewcommand{\geq}{\geqslant}
\theoremstyle{plain}
\newtheorem{theorem}{Theorem}
\newtheorem{corollary}[theorem]{Corollary}
\theoremstyle{definition}
\newtheorem{definition}[theorem]{Definition}
\newtheorem*{remark}{Remark}
\newcommand{\F}{\mathbb{F}}
\newcommand{\R}{\mathbb{R}}
\newcommand{\Z}{\mathbb{Z}}
\newcommand{\N}{\mathbb{N}}
\newcommand{\ve}{\mathbf{e}}
\newcommand{\vv}{\mathbf{v}}
\newcommand{\vx}{\mathbf{x}}
\newcommand{\vy}{\mathbf{y}}
\newcommand{\vz}{\mathbf{z}}
\newcommand{\Zero}{{\mathbf{0}}}
\newcommand{\opsi}{\overline{\psi}}
\DeclareMathOperator{\wt}{wt}
\DeclareMathOperator{\vol}{vol}
\newcommand{\kp}{k_+}
\newcommand{\km}{k_-}
\newcommand{\BALL}{{\mathcal B}(n,t,\kp,\km)}
\newcommand{\eqdef}{\triangleq}
\newcommand{\splt}{\diamond}
\title{On Lattice Packings and Coverings of Asymmetric
  Limited-Magnitude Balls}
\author{
  Hengjia Wei, Xin Wang, and Moshe Schwartz~\IEEEmembership{Senior Member,~IEEE}%
  \thanks{Hengjia Wei is with the School
    of Electrical and Computer Engineering, Ben-Gurion University of the Negev,
    Beer Sheva 8410501, Israel
    (e-mail: hjwei05@gmail.com).}%
  \thanks{Xin Wang is with the Department of Mathematics,
    Soochow University, Suzhou 215006, China
    (e-mail: xinw@suda.edu.cn).}
  \thanks{Moshe Schwartz is with the School
    of Electrical and Computer Engineering, Ben-Gurion University of the Negev,
    Beer Sheva 8410501, Israel
    (e-mail: schwartz@ee.bgu.ac.il).}%
  \thanks{H. Wei and M. Schwartz were supported in part by an Israel
    Science Foundation (ISF) grant 270/18.  The research of X. Wang
    was supported by the National Natural Science Foundation of China
    under Grant No. 11801392 and the Natural Science Foundation of
    Jiangsu Province under Grant No. BK20180833.  }
} 
\begin{document}

\maketitle

\begin{abstract}
  We construct integer error-correcting codes and covering codes for
  the limited-magnitude error channel with more than one error. The
  codes are lattices that pack or cover the space with the appropriate
  error ball. Some of the constructions attain an asymptotic
  packing/covering density that is constant. The results are obtained
  via various methods, including the use of codes in the Hamming
  metric, modular $B_t$-sequences, $2$-fold Sidon sets, and sets
  avoiding arithmetic progression.
\end{abstract}

\begin{IEEEkeywords}
  Integer Coding, Packing, Covering, Tiling, Lattices,
  Limited-Magnitude Errors
\end{IEEEkeywords}

\section{Introduction}

\IEEEPARstart{S}{everal} applications use information that is encoded
as vectors of integers, either directly or indirectly. Furthermore,
these vectors are affected by noise that may increase or decrease
entries of the vectors by a limited amount. We mention a few of these
examples: In high-density magnetic recording channels, information is
stored in the lengths of runs of $0$'s. Various phenomena may cause
the reading process to shift the positions of $1$'s (peak-shift
error), thereby changing the length of adjacent runs of $0$'s by a
limited amount (e.g., see \cite{KuzVin93,LevVin93}). In flash
memories, information is stored in the charge levels of cells in an
array. However, retention (slow charge leakage), and inter-cell
interference, may cause charge levels to move, usually, by a limited
amount (e.g., see~\cite{CasSchBohBru10}). More recently, in some
DNA-storage applications, information is stored in the lengths of
homopolymer runs.  These however, may end up shorter or longer than
planned, usually by a limited amount, due to variability in the
molecule-synthesis process (see~\cite{JaiFarSchBru20}).

In all of the applications mentioned above, an integer vector
$\vv\in\Z^n$ encodes information. If at most $t$ of its entries suffer
an increase by as much as $\kp$, or a decrease by as much as $\km$, we
can write the corrupted vector as $\vv+\ve$, where $\ve$ resides
within a shape we call the \emph{$(n,t,\kp,\km)$-error-ball}, and is
defined as
\begin{equation}
  \label{eq:ball}
  \BALL\triangleq \set*{\vx=(x_1,x_2,\ldots,x_n)\in \Z^n ; -\km \leq x_i \leq \kp \text{ and  } \wt(\vx) \leq t  },
\end{equation}
where $\wt(\vx)$ denotes the Hamming weight of $\vx$.

It now follows that an error-correcting code in this setting is
equivalent to a packing of $\Z^n$ by $\BALL$, a covering code is
equivalent to a covering of $\Z^n$ by $\BALL$, and a perfect code is
equivalent to a tiling of $\Z^n$ by $\BALL$.  An example of
$\cB(3,2,2,1)$ is shown in Fig.~\ref{fig:qc}.

\begin{figure}
  \begin{center}
    \includegraphics{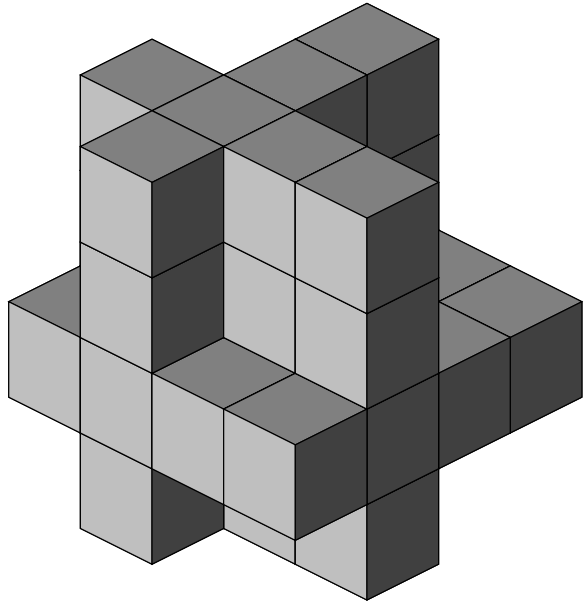}
  \end{center}
  \caption{A depiction of $\cB(3,2,2,1)$ where each point in
    $\cB(3,2,2,1)$ is shown as a unit cube.}
  \label{fig:qc}
\end{figure}

A significant amount of works has been devoted to
tiling/packing/covering of $\Z^n$ by these shapes, albeit, almost
exclusively for the case of $t=1$. When packing and tiling are
concerned, the \emph{cross}, $\cB(n,1,k,k)$, and semi-cross,
$\cB(n,1,k,0)$ have been extensively researched, e.g., see
\cite{Ste84,HamSte84,HicSte86,SteSza94,KloLuoNayYar11} and the many
references therein. This was extended to \emph{quasi-crosses},
$\cB(n,1,\kp,\km)$, in \cite{Sch12}, creating a flurry of activity on
the subject
\cite{YarKloBos13,Sch14,ZhaGe16,ZhaZhaGe17,ZhaGe18,YeZhaZhaGe20}. To
the best of our knowledge,
\cite{Ste90,KloLuoNayYar11,BuzEtz12,WeiSchwartz} are the only works to
consider $t\geq 2$. \cite{Ste90,BuzEtz12} considered tiling a notched
cube (or a ``chair''), which for certain parameters becomes
$\cB(n,n-1,k,0)$, while \cite{KloLuoNayYar11} considered packing the
same ball $\cB(n,n-1,k,0)$.  Among others, \cite{WeiSchwartz} recently
studied the tiling problem in the most general case, i.e., tiling
$\BALL$ for $t \geq 2$. Covering problems have also been studied,
though only when $t=1$,~\cite{CSW2014,Klove2016,KloveSchwartz2014}.

The main goal of this paper is to study packing and covering of $\Z^n$
by $\BALL$ when $t\geq 2$. We provide explicit constructions for both
packings and coverings, as well some non-constructive existence
results.  In particular, we demonstrate the existence of packings with
asymptotic packing density $\Omega(1)$ (as $n$ tends to infinity) for
some sets of $(t,\kp,\km)$, and the existence of coverings with
density $O(1)$ for any given $(t,\kp,\km)$. Additionally, we
generalize the concept of packing to $\lambda$-packing, which works in
conjunction with the list-decoding framework and list size $\lambda$.
We show the existence of $\lambda$-packings with density
$O(n^{-\epsilon})$ for any $(t,\kp,\km)$ and arbitrarily small
$\epsilon>0$, while maintaining a list size
$\lambda=O(\epsilon^{-t})$, which does not depend on $n$. Our results
are summarized at the end of this paper, in Table~\ref{tab:summary}.

The paper is organized as follows. We begin, in
Section~\ref{sec:prelim}, by providing notation and basic known
results used throughout the paper. Section~\ref{sec:packing} is
devoted to the study of packings. This is generalized in
Section~\ref{sec:genpack} to $\lambda$-packings. In
Section~\ref{sec:covering} we construct coverings. Finally, we
conclude in Section~\ref{sec:conclusion} by giving a summary of the
results as well as some open problems.

\section{Preliminaries}
\label{sec:prelim}

For integers $a\leq b$ we define $[a,b]\eqdef\set*{a,a+1,\dots,b}$ and
$[a,b]^*\eqdef [a,b]\setminus\set*{0}$. We use $\Z_m$ to denote the
cyclic group of integers with addition modulo $m$, and $\F_q$ to
denote the finite field of size $q$.

A \emph{lattice} $\Lambda\subseteq\Z^n$ is an additive subgroup of
$\Z^n$ (sometimes called an \emph{integer lattice}). A lattice
$\Lambda$ may be represented by a matrix $\cG(\Lambda)\in\Z^{n\times
  n}$, the span of whose rows (with integer coefficients) is
$\Lambda$. From a geometric point of view, when viewing $\Lambda$ inside
$\R^n$, a \emph{fundamental region} of $\Lambda$ is defined as
\[ \Pi(\Lambda)\eqdef \set*{ \sum_{i=1}^n c_i\vv_i ; c_i\in\R, 0\leq c_i < 1 },\]
where $\vv_i$ is the $i$-th row of $\cG(\Lambda)$. It is well known
that the volume of $\Pi(\Lambda)$ is $\abs{\det(\cG(\Lambda))}$, and
is independent of the choice of $\cG(\Lambda)$. We therefore denote
\[ \vol(\Lambda)\eqdef \vol(\Pi(\Lambda)) = \abs*{\det(\cG(\Lambda))}.\]
In addition, if $\vol(\Lambda)\neq 0$ then
\[ \vol(\Lambda)=\abs*{\Z^n / \Lambda}.\]

We say $\cB\subseteq\Z^n$ \emph{packs} $\Z^n$ by $T\subseteq\Z^n$, if
the translates of $\cB$ by elements from $T$ do not intersect,
namely, for all $\vv,\vv'\in T$, $\vv\neq\vv'$,
\[ (\vv+\cB)\cap(\vv'+\cB)=\varnothing.\]
We say $\cB$ \emph{covers} $\Z^n$ by $T$ if
\[ \bigcup_{\vv\in T} (\vv+\cB) = \Z^n.\]
If $\cB$ both packs and covers $\Z^n$ by $T$, then we say $\cB$
\emph{tiles} $\Z^n$ by $T$.  The \emph{packing density} (or
\emph{covering density}, respectively) of $\cB$ by $T$ is defined as
\[ \delta\eqdef \lim_{\ell\to\infty} \frac{\abs*{[-\ell,\ell]^n\cap T}\cdot\abs{\cB}}{\abs*{[-\ell,\ell]^n}}.\]
When $T=\Lambda$ is some lattice, we call these lattice packings and
lattice coverings, respectively. The density then takes on a simpler form
\[\delta = \frac{\abs{\cB} }{\vol(\Lambda)}.\]

Throughout the paper, the object we pack and cover $\Z^n$ with, is the
error ball, $\BALL$, defined in \eqref{eq:ball}. We conveniently
observe that for all integers $n\geq 1$, $0\leq t\leq n$, $0\leq
\km\leq \kp$, we have
\[ \abs*{\BALL}=\sum_{i=0}^t \binom{n}{i}(\kp+\km)^i.\]

\subsection{Lattice Packing/Covering/Tiling and Group Splitting}

Lattice packing, covering, and tiling of $\Z^n$ with
$\cB(n,t,\kp,\km)$, in connection with group splitting, has a long
history when $t=1$ (e.g., see \cite{Ste67}), called lattice tiling by
crosses if $\kp=\km$ (e.g., \cite{Ste84}), semi-crosses when $\km=0$
(e.g., \cite{Ste84,HamSte84,HicSte86}), and quasi-crosses when
$\kp\geq\km\geq 0$ (e.g., \cite{Sch12,Sch14}). For an excellent
treatment and history, the reader is referred to~\cite{SteSza94} and
the many references therein. Other variations, keeping $t=1$
include~\cite{Tam98,Tam05}. More recent results may be found
in~\cite{YeZhaZhaGe20} and the references therein.

For $t\geq 2$, an extended definition of group splitting in connection
with lattice tiling is provided in \cite{WeiSchwartz}. In the
following, we modify this definition to distinguish between lattice
packings, coverings, and tilings.

\begin{definition}
  \label{def:split}
  Let $G$ be a finite Abelian group, where $+$ denotes the group
  operation. For $m\in\Z$ and $g\in G$, let $mg$ denote $g+g+\dots+g$
  (with $m$ copies of $g$) when $m>0$, which is extended in the
  natural way to $m\leq 0$. Let $M\subseteq\Z\setminus\set*{0}$ be a
  finite set, and $S=\set*{s_1,s_2,\dots,s_n}\subseteq G$. 
  \begin{enumerate}
  \item
    If the elements $\ve\cdot (s_1,\dots,s_n)$, where $\ve\in
    (M\cup\set{0})^n$ and $1\leq \wt(\ve)\leq t$, are all distinct and
    non-zero in $G$, we say the
    set $M$ partially $t$-splits $G$ with splitter set $S$, denoted
    \[G \geq M\splt_t S.\]
  \item
    If for every $g\in G$ there exists a vector $\ve\in
    (M\cup\set{0})^n$, $\wt(\ve)\leq t$, such that $g=\ve\cdot
    (s_1,\dots,s_n)$, we say the set $M$ completely $t$-splits $G$
    with splitter set $S$, denoted
    \[G \leq M\splt_t S.\]
  \item
    If $G \geq M\splt_t S$ and $G \leq M\splt_t S$ we say $M$
    $t$-splits $G$ with splitter set $S$, and write
    \[ G = M\splt_t S.\]
  \end{enumerate}
\end{definition}

In our context, since we are interested in packing and covering with
$\BALL$, then in the previous definition, we need to take $M\eqdef
[-\km,\kp]^*$. Thus, the following two theorems show the equivalence
of partial $t$-splittings with $M$ and lattice packings of $\BALL$,
summarizing Lemma 3 and Lemma 4 in\cite{BuzEtz12}, and similarly for
complete $t$-splittings and lattice coverings.

\begin{theorem}
  \label{th:lattotile}
  Let $G$ be a finite Abelian group, $M\eqdef [-\km,\kp]^*$, and
  $S=\set*{s_1,\dots,s_n}\subseteq G$. Define $\phi:\Z^n\to G$ as
  $\phi(\vx)\eqdef\vx\cdot(s_1,\dots,s_n)$ and let
  $\Lambda\eqdef\ker\phi$ be a lattice. 
  \begin{enumerate}
  \item
    If $G \geq M\splt_t S$, then $\cB(n,t,\kp,\km)$ packs $\Z^n$ by
    $\Lambda$.
  \item
    If $G \leq M\splt_t S$, then $\cB(n,t,\kp,\km)$ covers $\Z^n$ by
    $\Lambda$.
  \end{enumerate}
\end{theorem}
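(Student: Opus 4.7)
The plan is to recast both claims as properties of the restricted map $\phi|_{\cB}$, where $\cB\eqdef\cB(n,t,\kp,\km)$, and then read them off directly from the splitting hypothesis. Since $\Lambda=\ker\phi$, for distinct $\vv,\vv'\in\Lambda$ the translates $\vv+\cB$ and $\vv'+\cB$ intersect iff there exist $\vx,\vy\in\cB$ with $\vx-\vy=\vv'-\vv\in\Lambda\setminus\set*{\Zero}$, i.e.\ iff two distinct elements of $\cB$ are collapsed by $\phi$. Hence $\cB$ packs $\Z^n$ by $\Lambda$ iff $\phi|_{\cB}$ is injective. Likewise, $\cB$ covers $\Z^n$ by $\Lambda$ iff every coset of $\Lambda$ in $\Z^n$ meets $\cB$, which, since cosets are classified by $\phi$, is exactly $\phi(\cB)\supseteq\phi(\Z^n)$.

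The one auxiliary observation I would make is that the index set in Definition~\ref{def:split} coincides with $\cB\setminus\set*{\Zero}$: because $M=[-\km,\kp]^*$ we have $M\cup\set*{0}=[-\km,\kp]$, so the condition $\ve\in(M\cup\set*{0})^n$ together with $1\leq\wt(\ve)\leq t$ parametrizes precisely the non-zero elements of the ball, and the pairing $\ve\cdot(s_1,\dots,s_n)$ is exactly $\phi(\ve)$.

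With this dictionary in place, part~1 is immediate: $G\geq M\splt_t S$ says that $\phi$ sends the $|\cB|-1$ non-zero elements of $\cB$ to pairwise distinct non-zero values of $G$; since $\phi(\Zero)=0$, this extends to an injection on all of $\cB$, which is the packing condition. Part~2 is equally direct: $G\leq M\splt_t S$ says every $g\in G$ equals $\phi(\ve)$ for some $\ve\in\cB$ (taking $\ve=\Zero$ when $g=0$), so $\phi(\cB)=G\supseteq\phi(\Z^n)$, yielding the covering.

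I do not expect any real obstacle, since the proof is essentially unfolding the definitions. The only bit of care that is needed is to restore the zero vector by hand when passing from Definition~\ref{def:split}, which quantifies over $\wt(\ve)\geq 1$, to injectivity or surjectivity of $\phi|_{\cB}$, whose domain of interest includes $\Zero$.
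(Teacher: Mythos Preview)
Your proposal is correct and follows essentially the same approach as the paper: the paper defers the packing direction to \cite[Lemma~4]{BuzEtz12} and, for covering, picks $\ve\in\cB$ with $\phi(\ve)=\phi(\vx)$ so that $\vx\in(\vx-\ve)+\cB$ --- exactly your surjectivity-of-$\phi|_{\cB}$ argument unwound. Your write-up is in fact more self-contained, since you also spell out the packing direction (injectivity of $\phi|_{\cB}$) rather than citing it; one minor remark is that in Definition~\ref{def:split}(2) the complete-splitting condition already allows $\wt(\ve)=0$, so the ``restore the zero vector by hand'' comment is only needed for part~1.
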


\begin{IEEEproof}
  For packing, see Lemma 4 in \cite{BuzEtz12}. For covering, denote
  $\cB\eqdef\cB(n,t,\kp,\km)$. Assume $\vx\in\Z^n$. Since $G \leq
  M\splt_t S$, there exists a vector $\ve\in\cB$ such that
  $\phi(\vx)=\phi(\ve)$. Then $\vv\eqdef \vx-\ve\in\Lambda$, and
  $\vx\in\vv+\cB$.
\end{IEEEproof}

In the theorem above, for $G \geq M\splt_t S$, since the quotient
group $\Z^n / \Lambda$ is isomorphic to the image of $\phi$, which is
a subgroup of $G$, we have $\vol(\Lambda)\leq \abs{G}$. Then the
packing density of $\Lambda$ is
\[\delta = \frac{\abs{\BALL} }{\vol(\Lambda)}\geq\frac{\abs{\BALL} }{\abs{G}} .\]
For $G \leq M\splt_t S$, $\vol(\Lambda)= \abs{G}$, and the
covering density of $\Lambda$ is
\[\delta = \frac{\abs{\BALL} }{\vol(\Lambda)}=\frac{\abs{\BALL} }{\abs{G}} .\]

It is known that a lattice packing implies a partial splitting. While
not of immediate use to us in this paper, we do mention that an
analogous claim is also true for lattice coverings, as the following theorem
shows.

\begin{theorem}
  Let $\Lambda\subseteq\Z^n$ be a lattice. Define $G\eqdef \Z^n /
  \Lambda$. Let $\phi:\Z^n\to G$ be the natural homomorphism, namely
  the one that maps any $\vx\in\Z^n$ to the coset of $\Lambda$ in
  which it resides, and then $\Lambda=\ker\phi$. Finally, let $\ve_i$
  be the $i$-th unit vector in $\Z^n$ and set $s_i\eqdef\phi(\ve_i)$ for
  all $1\leq i\leq n$ and $S\eqdef\set*{s_1,s_2,\dots,s_n}$.
  \begin{enumerate}
  \item If $\cB(n,t,\kp,\km)$ packs $\Z^n$ by $\Lambda$, then  $G\geq M\splt_t S$;
  \item  if $\cB(n,t,\kp,\km)$ covers $\Z^n$ by $\Lambda$, then  $G\leq M\splt_t S$,
  \end{enumerate}
  where $M\eqdef [-\km,\kp]^*$.
\end{theorem}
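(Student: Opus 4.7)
The plan is to unwind the definitions. The crucial preliminary observation is that since $\phi\colon \Z^n\to G$ is a group homomorphism and $s_i=\phi(\ve_i)$, for any $\vx=(x_1,\dots,x_n)\in\Z^n$ we have
\[ \phi(\vx)=\phi\parenv*{\sum_{i=1}^n x_i\ve_i}=\sum_{i=1}^n x_i s_i = \vx\cdot(s_1,\dots,s_n). \]
Thus evaluating $\vx\cdot(s_1,\dots,s_n)$ in $G$ is literally the same operation as reducing $\vx$ modulo $\Lambda$. Also, any $\ve\in(M\cup\set{0})^n$ with $\wt(\ve)\leq t$ satisfies $\ve\in\cB(n,t,\kp,\km)$, since $M\cup\set{0}=[-\km,\kp]$.

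For part~(1), I will translate packing into a statement about coset representatives. The packing hypothesis $(\vv+\cB)\cap(\vv'+\cB)=\varnothing$ for all distinct $\vv,\vv'\in\Lambda$ is equivalent to: for any $\ve,\ve'\in\cB\eqdef\cB(n,t,\kp,\km)$ with $\ve\neq\ve'$, the difference $\ve-\ve'$ is not in $\Lambda$, equivalently $\phi(\ve)\neq\phi(\ve')$. Now take any two vectors $\ve,\ve'\in(M\cup\set{0})^n$ with $1\leq \wt(\ve),\wt(\ve')\leq t$. Both lie in $\cB$ by the preliminary observation, so if $\ve\neq\ve'$ then $\phi(\ve)\neq\phi(\ve')$, giving $\ve\cdot(s_1,\dots,s_n)\neq \ve'\cdot(s_1,\dots,s_n)$ in $G$. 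Applying the same argument to the pair $(\ve,\Zero)$ (note $\Zero\in\cB$) shows that $\phi(\ve)\neq\phi(\Zero)=0$, so $\ve\cdot(s_1,\dots,s_n)\neq 0$. This is exactly $G\geq M\splt_t S$.

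For part~(2), fix $g\in G$ and choose any preimage $\vx\in\Z^n$ with $\phi(\vx)=g$. By the covering hypothesis, $\Z^n=\bigcup_{\vv\in\Lambda}(\vv+\cB)$, so there is $\vv\in\Lambda$ and $\ve\in\cB$ with $\vx=\vv+\ve$. Applying $\phi$ and using $\phi(\vv)=0$ gives $g=\phi(\vx)=\phi(\ve)=\ve\cdot(s_1,\dots,s_n)$. Since $\ve\in\cB$, its entries lie in $[-\km,\kp]=M\cup\set{0}$ and $\wt(\ve)\leq t$, so $\ve$ is an admissible coefficient vector for a complete $t$-splitting. Thus $G\leq M\splt_t S$.

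There is no real obstacle here; the entire proof amounts to recognizing that $\phi$, being a homomorphism determined by its values on the standard basis, identifies the reduction $\vx\mapsto \vx\!\mod\Lambda$ with the linear combination $\vx\cdot(s_1,\dots,s_n)$ in $G$, after which both packing and covering translate directly into the two parts of Definition~\ref{def:split}.
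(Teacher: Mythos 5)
Your proof is correct and follows essentially the same route as the paper: part~(2) is verbatim the paper's argument, and for part~(1) you supply the standard ``difference of two ball elements lies in $\Lambda$'' argument that the paper outsources to a citation of Lemma~3 in \cite{BuzEtz12}. No gaps.
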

\begin{IEEEproof}
  The packing see Lemma 3 in \cite{BuzEtz12}. Now we prove the claim
  for covering. Let $\Lambda+\vx \in G$ be any element of $G$. Since
  $\BALL$ covers $\Z^n$ by $\Lambda$, there exist $\vv\in\Lambda$ and
  $\ve\in\BALL$ such that $\vx=\vv+\ve$. This means
  \[\Lambda+\vx=\phi(\vx)=\phi(\vv)+\phi(\ve)=\phi(\ve)=\ve\cdot(s_1,\dots,s_n),\]
  which completes the proof.
\end{IEEEproof}

Finally, a connection between perfect codes in the Hamming metric, and
lattice tilings with $\BALL$ was observed in~\cite{WeiSchwartz}. We repeat
a theorem we shall later generalize.

\begin{theorem}[Theorem~3 in \cite{WeiSchwartz}]
  \label{th:perfectcode}
  In the Hamming metric space, let $C$ be a perfect linear
  $[n,k,2t+1]$ code over $\F_p$, with $p$ a prime. If $\kp+\km+1=p$,
  then
  \[ \Lambda\eqdef \set*{ \vx\in\Z^n ; (\vx\bmod p)\in C}\]
  is a lattice, and $\cB(n,t,\kp,\km)$ tiles $\Z^n$ by
  $\Lambda$.
\end{theorem}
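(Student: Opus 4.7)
The plan is to realize $\Lambda$ as the preimage of $C$ under reduction modulo $p$, and then invoke Theorem~\ref{th:lattotile} with $G=\F_p^n/C$ to establish packing and covering simultaneously. Since $C\leq\F_p^n$ is a subgroup and $\pi:\Z^n\to\F_p^n$, $\vx\mapsto\vx\bmod p$, is a group homomorphism, $\Lambda=\pi^{-1}(C)$ is automatically a subgroup of $\Z^n$, hence a lattice; moreover $\pi$ descends to an isomorphism $\Z^n/\Lambda\cong\F_p^n/C$.

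The key observation is that the hypothesis $\kp+\km+1=p$ forces $[-\km,\kp]$ to be a complete system of residues modulo $p$, so $\pi$ restricts to a bijection $[-\km,\kp]^n\to\F_p^n$. Since a coordinate in $[-\km,\kp]$ vanishes if and only if its reduction mod $p$ vanishes, this bijection preserves Hamming weight coordinatewise, and so carries $\BALL$ bijectively onto the Hamming ball $\set*{\vx\in\F_p^n;\wt(\vx)\leq t}$ in $\F_p^n$.

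I would then set $G\eqdef\F_p^n/C$, $s_i\eqdef\ve_i+C$, $S\eqdef\set*{s_1,\dots,s_n}$, and $M\eqdef[-\km,\kp]^*$; with this choice the map $\phi$ of Theorem~\ref{th:lattotile} coincides with $\vx\mapsto(\vx\bmod p)+C$, so its kernel is exactly $\Lambda$. For the packing half, any two distinct $\ve,\ve'\in(M\cup\set{0})^n$ of Hamming weight at most $t$ have a difference whose reduction mod $p$ is nonzero and of Hamming weight at most $2t$, hence cannot lie in $C$ (of minimum distance $2t+1$); this yields $G\geq M\splt_t S$. For the covering half, every coset $g\in\F_p^n/C$ contains, by the perfectness of $C$, a representative $\vf$ of Hamming weight at most $t$, and lifting $\vf$ through the bijection above produces an $\ve\in(M\cup\set{0})^n$ of weight at most $t$ with $\ve\cdot(s_1,\dots,s_n)=g$; this yields $G\leq M\splt_t S$. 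Theorem~\ref{th:lattotile} then delivers both packing and covering, i.e., a tiling.

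There is no serious obstacle here; the argument is essentially a dictionary between perfect Hamming-metric codes over $\F_p$ and lattice tilings of $\BALL$. The only delicate step is verifying that the coordinatewise bijection $[-\km,\kp]\to\F_p$ maps nonzero integers to nonzero residues, which is precisely where the assumption $\kp+\km+1=p$ enters.
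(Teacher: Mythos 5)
Your proof is correct: the identification $\Z^n/\Lambda\cong\F_p^n/C$, the weight-preserving bijection $[-\km,\kp]^n\to\F_p^n$ forced by $\kp+\km+1=p$, the minimum-distance argument for $G\geq M\splt_t S$, and the coset-leader argument from perfectness for $G\leq M\splt_t S$ together give exactly what Theorem~\ref{th:lattotile} needs. The paper itself quotes this result from \cite{WeiSchwartz} without reproving it, but your argument is precisely the standard dictionary between perfect codes and splittings that the paper's framework (Theorem~\ref{th:lattotile} and its surrounding discussion) is built on, so there is nothing to fault.
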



\section{Constructions of Lattice Packings}
\label{sec:packing}

In this section we describe several constructions for packings of
$\BALL$. We begin by showing how to translate codes in the Hamming
metric into lattices that pack $\BALL$. Apart from a single case,
these have vanishing density. The motivation for showing these
``off-the-shelf'' constructions is to create a baseline against which
we measure our tailor-made constructions that appear later. These use
$B_t[N;1]$ sequences, or take inspiration from constructions of sets
with no arithmetic progression, to construct codes that improve upon
the baseline.

\subsection{Constructions Based on Error-Correcting Codes}

Theorem~\ref{th:perfectcode} can be easily modified to yield the
following construction, the proof of which is the same as that of
\cite[Theorem 3]{WeiSchwartz} and we omit here to avoid unnecessary
repetition.

\begin{theorem}
  \label{th:linearcode}
  In the Hamming metric space, let $C$ be a linear $[n,k,2t+1]$ code
  over $\F_p$, with $p$ a prime. If $0\leq \kp+\km < p$ are integers,
  then
  \[ \Lambda\eqdef \set*{ \vx\in\Z^n ; (\vx\bmod p)\in C}\]
  is a lattice, and $\cB(n,t,\kp,\km)$ packs $\Z^n$ by
  $\Lambda$.
\end{theorem}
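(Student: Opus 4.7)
The plan is to mirror the proof of Theorem~\ref{th:perfectcode} almost verbatim, since the only structural change is to relax the hypothesis $\kp+\km+1=p$ (which forced tiling) to $\kp+\km<p$ (which only forces packing). Two things must be checked: (i) that $\Lambda$ is actually a lattice, and (ii) that the $\Lambda$-translates of $\BALL$ are pairwise disjoint. I would do them in that order.

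For (i), let $\psi:\Z^n\to\F_p^n$ be coordinatewise reduction modulo $p$, which is a group homomorphism. Since $C$ is linear, it is an additive subgroup of $\F_p^n$, so $\Lambda=\psi^{-1}(C)$ is an additive subgroup of $\Z^n$, i.e., a lattice. For (ii), I would argue by contradiction: assume distinct $\ve,\ve'\in\BALL$ with $\vd\eqdef\ve-\ve'\in\Lambda$, hence $\psi(\vd)\in C$. Componentwise $d_i\in[-(\kp+\km),\kp+\km]$, and the hypothesis $\kp+\km<p$ means that the only multiple of $p$ in this interval is $0$. Consequently $d_i\equiv 0\pmod p$ iff $d_i=0$ in $\Z$, so $\wt(\psi(\vd))=\wt(\vd)\le \wt(\ve)+\wt(\ve')\le 2t$. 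Since $C$ has minimum distance $2t+1$, the codeword $\psi(\vd)$ must be zero; combined with the previous equivalence, this forces $\vd=\mathbf{0}$, contradicting $\ve\neq\ve'$.

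There is really no hard step here; the only point to highlight is the role of the strict inequality $\kp+\km<p$. In the tiling version it provided the exact cardinality matching for perfection, whereas here its job is the simpler one of preventing the mod-$p$ reduction from collapsing distinct integer values within a single coordinate of $\BALL$. That no-wraparound property is exactly what makes the integer support of $\vd$ coincide with the Hamming support of $\psi(\vd)$, which in turn lets the minimum-distance hypothesis on $C$ do all the actual packing work.
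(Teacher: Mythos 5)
Your proof is correct and is exactly the argument the paper intends (the paper omits it, saying it is the same as that of Theorem~3 in \cite{WeiSchwartz}): reduce mod $p$, note that $\kp+\km<p$ prevents wraparound so the Hamming weight of $\psi(\ve-\ve')$ equals that of $\ve-\ve'\le 2t$, and invoke the minimum distance $2t+1$ of $C$. No issues.
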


Since $\cB(n,t,\kp,\km)$ packs $\Z^n$ by $\Lambda$, the lattice
$\Lambda$ is an error-correcting code over $\Z$ for asymmetric
limited-magnitude errors.  We note that a similar construction of
error-correcting codes over a finite alphabet for asymmetric
limited-magnitude errors was presented in \cite{CasSchBohBru10} and
the decoding scheme therein can be adapted here as follows. Let
$\vx\in\Lambda$ be a codeword, and $\vy \in\vx+\BALL$ be the channel
output.  Denote $\boldsymbol{\psi}= \vy \pmod{p}$. Run the decoding
algorithm of the linear $[n,k,2t+1]$ code on $\boldsymbol{\psi}$ and
denote the output as $\boldsymbol{\phi}$. Then $\boldsymbol{\phi}$ is
a codeword of the linear code over $\F_p$ and it is easy to see that
$\boldsymbol{\phi}=\vx \pmod{p}$. Thus $\vy-\vx\equiv\boldsymbol{\psi}
- \boldsymbol{\phi} \pmod{p}$. Denote
$\boldsymbol{\epsilon}=\boldsymbol{\psi} - \boldsymbol{\phi} \pmod{p}$
and let $\ve=(e_1,e_2,\ldots,e_n)$ where
\[
e_i \eqdef \begin{cases} \epsilon_i, \textup{\ \ if $0\leq \epsilon_i \leq \kp$;} \\
\epsilon_i-p, \textup{\ \ otherwise.}\\
\end{cases}
\]
Then $\vx$ can be decoded as $\vx=\vy-\ve$.

Now let us look at the packing density.
  
\begin{corollary}
  \label{cor:linearcode}
Let $\Lambda$ be the lattice constructed in
Theorem~\ref{th:linearcode}. Then $\vol(\Lambda)=p^{n-k}$
and the packing density is
\[\delta=\frac{\sum_{i=0}^t \binom{n}{i}(\kp+\km)^i}{ p^{n-k} }.\]
\end{corollary}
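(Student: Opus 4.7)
The proof is essentially a two-line computation that pulls together facts already established earlier in the paper, so the plan is really just to identify the right bookkeeping.

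First I would compute $\vol(\Lambda)$. The lattice $\Lambda=\set{\vx\in\Z^n ; (\vx\bmod p)\in C}$ is the preimage of $C$ under the natural reduction map $\pi:\Z^n\to\F_p^n$. Since $\pi$ is a surjective group homomorphism with kernel $p\Z^n\subseteq\Lambda$, the induced map $\Z^n/\Lambda\to\F_p^n/C$ is an isomorphism. Because $C$ is a linear $[n,k]$ code over $\F_p$, we have $\abs{\F_p^n/C}=p^{n-k}$, so
\[
\vol(\Lambda)=\abs{\Z^n/\Lambda}=p^{n-k},
\]
using the identity $\vol(\Lambda)=\abs{\Z^n/\Lambda}$ from the preliminaries (and noting that $\vol(\Lambda)\neq 0$ since the full-rank sublattice $p\Z^n$ lies in $\Lambda$).

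Next I would invoke Theorem~\ref{th:linearcode}, which guarantees that $\cB(n,t,\kp,\km)$ packs $\Z^n$ by $\Lambda$, so that the density formula for lattice packings applies:
\[
\delta=\frac{\abs{\BALL}}{\vol(\Lambda)}.
\]
The cardinality of the error ball was already computed in the preliminaries as
\[
\abs{\BALL}=\sum_{i=0}^{t}\binom{n}{i}(\kp+\km)^i,
\]
and substituting $\vol(\Lambda)=p^{n-k}$ yields the stated expression for $\delta$.

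There is no real obstacle here; the only thing one must be slightly careful about is justifying $\vol(\Lambda)=\abs{\Z^n/\Lambda}$, which requires $\Lambda$ to have full rank. This is immediate since $p\Z^n\subseteq\Lambda$. Everything else is substitution.
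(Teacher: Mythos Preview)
Your proof is correct and follows essentially the same approach as the paper: both compute $\vol(\Lambda)=\abs{\Z^n/\Lambda}$ by identifying the quotient $\Z^n/\Lambda$ with $\F_p^n/C\cong\Z_p^{n-k}$, and then substitute into the lattice density formula. The paper simply cites an external reference (\cite{WeiSchwartz}) for the isomorphism, whereas you supply the short self-contained argument via the reduction map $\pi$, which is arguably cleaner.
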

\begin{IEEEproof}
  In Theorem~\ref{th:linearcode}, the quotient group $\Z^n/ \Lambda$
  is isomorphic to the group $\Z_p^{n-k}$ (see Example 3 and Example 4
  in \cite{WeiSchwartz}). The claim is then immediate.
\end{IEEEproof}

When $t$ is small, we may use BCH codes as the input to construct the
lattice packing.

\begin{theorem}[Primitive narrow-sense BCH codes {\cite[Theorem 10]{aly2007quantum}}]\label{primitiveBCH} Let $p$ be a prime. 
Fix $m\ge 1$ and $2\le d\le p^{\ceil{m/2}}-1$. Set $n=p^m-1$.  Then
there exists an $[n,k,d]$-code over $\F_p$ with
\[k=n-\ceil{(d-1)(1-1/p)}m.\]
\end{theorem}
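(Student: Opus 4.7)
The plan is to realize the claimed code as a primitive narrow-sense BCH code over $\F_p$. Let $\alpha$ be a primitive element of $\F_{p^m}$, and let $C$ be the cyclic code of length $n=p^m-1$ whose codewords $c(x)\in\F_p[x]/(x^n-1)$ satisfy $c(\alpha^i)=0$ for $i=1,2,\dots,d-1$. The BCH bound immediately yields minimum distance at least $d$, so the whole argument reduces to computing the dimension. Since the generator polynomial $g(x)$ is the least common multiple of the minimal polynomials of $\alpha,\alpha^2,\dots,\alpha^{d-1}$, we have $k=n-\deg g$ and $\deg g=\abs*{\bigcup_{i=1}^{d-1} C_i}$, where $C_i\eqdef\set*{ip^s \bmod n ; s\ge 0}$ is the $p$-cyclotomic coset of $i$ modulo $n$. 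The task therefore becomes proving
\[ \abs*{\bigcup_{i=1}^{d-1} C_i} = \ceil*{(d-1)(1-1/p)}\cdot m. \]

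First, I would reduce to indices coprime to $p$: whenever $p\mid i$, we have $i\equiv p\cdot(i/p)\pmod n$, hence $C_i=C_{i/p}$, and iterating shows that each $C_i$ with $1\le i\le d-1$ coincides with some $C_j$ for which $\gcd(j,p)=1$ and $1\le j\le d-1$. The number of such $j$'s is $(d-1)-\floor*{(d-1)/p}=\ceil*{(d-1)(1-1/p)}$, giving the correct count of candidate cosets.

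Next, I would show that each such coset has the maximal size $m$, and that different such $j$'s yield different cosets. For the size, $\abs{C_j}$ equals the smallest $r\ge 1$ with $(p^m-1)\mid j(p^r-1)$, so in particular $r\mid m$; if $r<m$ then $r\le\floor*{m/2}$ (the largest proper divisor of any integer $m\ge 2$), and then $j(p^r-1)\le(p^{\ceil*{m/2}}-1)(p^{\floor*{m/2}}-1)<p^m-1$, a contradiction. For distinctness, suppose $C_j=C_{j'}$ with $1\le j,j'\le p^{\ceil*{m/2}}-1$ and both coprime to $p$; then $j'\equiv jp^s\pmod{p^m-1}$ for some $0\le s\le m-1$, and after possibly swapping $j$ and $j'$ I may assume $s\le\floor*{m/2}$. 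The same bound gives $jp^s\le(p^{\ceil*{m/2}}-1)p^{\floor*{m/2}}<p^m$, so the modular reduction is trivial and $j'=jp^s$; since $\gcd(j',p)=1$ this forces $s=0$ and $j=j'$.

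Combining these three ingredients yields $\deg g=\ceil*{(d-1)(1-1/p)}\cdot m$, and hence the claimed dimension. The main technical obstacle, and the only place where the hypothesis $d\le p^{\ceil*{m/2}}-1$ is essential, is the size-and-distinctness step: below this threshold the relevant cyclotomic cosets behave uniformly and are packed disjointly into $\Z/n\Z$, while beyond it they may shrink or collide, destroying the closed-form dimension.
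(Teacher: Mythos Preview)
The paper does not prove this theorem at all: it is quoted verbatim from \cite[Theorem~10]{aly2007quantum} as a black-box result and used only as input to Corollary~\ref{cor:lc}. So there is no ``paper's own proof'' to compare against.

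Your argument is the standard direct proof and is correct. The reduction to representatives $j\in[1,d-1]$ with $\gcd(j,p)=1$, the count $(d-1)-\floor*{(d-1)/p}=\ceil*{(d-1)(1-1/p)}$, and the size/distinctness verification via the bound $j\le d-1\le p^{\ceil{m/2}}-2$ all go through as you describe. One small remark: in the distinctness step you wrote $j\le p^{\ceil{m/2}}-1$, but in fact $j\le d-1\le p^{\ceil{m/2}}-2$; this tighter bound is what guarantees $jp^s\le p^m-2p^{\floor{m/2}}<p^m-1$ so that the modular reduction is genuinely trivial (with the looser bound one gets $jp^s\le p^m-p^{\floor{m/2}}$, which still suffices when $m\ge 2$, and the case $m=1$ is immediate since then $p\equiv 1\pmod n$). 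The divisibility $r\mid m$ is justified because the set of exponents $s$ with $jp^s\equiv j\pmod n$ is a subgroup of $\Z$ containing $m$.
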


\begin{corollary}\label{cor:lc}
Let $\psi(x)$ be the smallest prime not smaller than $x$ \footnote{It
  is known that $\psi(x)\leq x+x^{21/40}$ \cite{BakHarPin01}, and
  conjectured that $\psi(x)=x+O(\log x)$.} and denote
$p\eqdef\psi(\kp+\km+1)$. Let $m,t$ be positive integers such that
$2t\le p^{\ceil{m/2}}-2$, and set $n=p^m-1$. Then $\Z^n$ can be
lattice packed by $\BALL$ with density
\[\delta = \frac{\sum_{i=0}^t \binom{n}{i}(\kp+\km)^i}{(n+1)^{\ceil{2t(1-{1}/p)}} }.\]
\end{corollary}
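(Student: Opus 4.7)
The plan is to reduce the corollary to a direct instantiation of Corollary~\ref{cor:linearcode} using the primitive narrow-sense BCH code from Theorem~\ref{primitiveBCH}. Specifically, I would set $p \eqdef \psi(\kp+\km+1)$, so by definition $p$ is prime and $p \geq \kp+\km+1$, which means $\kp+\km < p$. This gives me exactly the alphabet-size hypothesis needed to apply Theorem~\ref{th:linearcode} once I produce a suitable Hamming-metric code.

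Next, I would choose $d \eqdef 2t+1$ and verify the hypotheses of Theorem~\ref{primitiveBCH}: the inequality $2 \le d$ holds since $t \ge 1$, and the upper bound $d \le p^{\lceil m/2 \rceil}-1$ is equivalent to $2t \le p^{\lceil m/2 \rceil}-2$, which is given. Theorem~\ref{primitiveBCH} then produces a linear $[n,k,2t+1]$ code over $\F_p$ with
\[
n = p^m - 1, \qquad k = n - \ceil*{2t(1-1/p)}\, m.
\]
Plugging this code into Theorem~\ref{th:linearcode} yields a lattice $\Lambda \subseteq \Z^n$ such that $\BALL$ packs $\Z^n$ by $\Lambda$.

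Finally, by Corollary~\ref{cor:linearcode} the volume of this lattice is
\[
\vol(\Lambda) = p^{n-k} = p^{\ceil{2t(1-1/p)}\, m} = (p^m)^{\ceil{2t(1-1/p)}} = (n+1)^{\ceil{2t(1-1/p)}},
\]
using the identity $n+1 = p^m$. Substituting this into the density formula of Corollary~\ref{cor:linearcode} and recalling the explicit ball-size $\abs{\BALL}=\sum_{i=0}^t\binom{n}{i}(\kp+\km)^i$ gives exactly the claimed expression for $\delta$.

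There is no genuine obstacle here beyond bookkeeping: the only things to check are that $p$ is prime (true by construction of $\psi$), that $\kp+\km<p$ (also by construction), and that the BCH parameter bound $2t \le p^{\lceil m/2\rceil}-2$ in the hypothesis matches exactly the constraint needed to invoke Theorem~\ref{primitiveBCH} with $d=2t+1$. Thus the proof is essentially a one-line chain: BCH code $\Rightarrow$ Theorem~\ref{th:linearcode} $\Rightarrow$ Corollary~\ref{cor:linearcode}, followed by the rewriting $p^{n-k} = (n+1)^{\lceil 2t(1-1/p)\rceil}$.
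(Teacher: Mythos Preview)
Your proposal is correct and follows precisely the paper's own approach: the paper's proof is the single line ``Simply combine Theorem~\ref{primitiveBCH} with Corollary~\ref{cor:linearcode},'' and you have supplied exactly that combination together with the bookkeeping that verifies the hypotheses and rewrites $p^{n-k}$ as $(n+1)^{\lceil 2t(1-1/p)\rceil}$.
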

\begin{IEEEproof}
  Simply combine Theorem~\ref{primitiveBCH} with
  Corollary~\ref{cor:linearcode}.
\end{IEEEproof}

Note that if $\kp=1$ and $\km=0$, then the packing density in
Corollary~\ref{cor:lc} is $\delta=\frac{\sum_{i=0}^t
  \binom{n}{i}}{(n+1)^{t}}=\frac{1}{t!}+o(1)$ (when $t$ is fixed and
$n$ tends to infinity). However, for all the other values of $\kp$ and
$\km$, namely $p\geq 3$, the density always vanishes when $n$ tends to
infinity, i.e., $\delta=\Theta(n^{t-\ceil{2t(1-{1}/p)}})$. In the remainder
of this section, we will present some constructions to provide lattice
packings of higher density.

Perfect codes were used in \cite{WeiSchwartz} obtain lattice tilings,
i.e., lattice packings with density $1$. Similarly, it is possible to
use quasi-perfect linear codes to obtain lattice packings with high
densities.

\begin{corollary}\label{cor:quasiperfectlc} Assume that $1\leq \kp+\km\leq 2$
  are non-negative integers.
  \begin{enumerate}
  \item
    Let $m$ be a positive integer and $n=(3^m+1)/2$. Then $\Z^n$ can
    be lattice-packed by $\cB(n,2,\kp,\km)$ with density
    \[\delta = \frac{  {n\choose 2}(\kp+\km)^2 +n(\kp+\km) +1 }{(2n-1)^2}.\]
  \item
    Let $m\geq 3$ be an odd integer and $n=(3^m-1)/2$. Then $\Z^n$ can
    be lattice packed by $\cB(n,2,\kp,\km)$ with density
    \[\delta = \frac{  {n\choose 2}(\kp+\km)^2 +n(\kp+\km) +1 }{(2n+1)^2}.\]
  \end{enumerate}
\end{corollary}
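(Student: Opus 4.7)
The plan is to apply Theorem~\ref{th:linearcode} and Corollary~\ref{cor:linearcode} with the prime $p=3$, which is legal because the hypothesis $\kp + \km \leq 2$ forces $\kp + \km < p$. Under this reduction it suffices, in each case, to exhibit a ternary linear $[n, n-2m, 5]$ code: Corollary~\ref{cor:linearcode} then produces a lattice packing of $\cB(n,2,\kp,\km)$ of density
\[
\delta = \frac{\binom{n}{2}(\kp + \km)^2 + n(\kp + \km) + 1}{3^{2m}},
\]
and substituting $3^m = 2n - 1$ in case~(1) or $3^m = 2n + 1$ in case~(2) turns the denominator into exactly $(2n-1)^2$ or $(2n+1)^2$, matching the claimed formulas.

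To construct the required codes I would use cyclic codes over $\F_3$. Since $n$ divides $3^{2m}-1$ in both cases, a primitive $n$-th root of unity $\alpha$ lives in $\F_{3^{2m}}$, and cyclic codes of length $n$ over $\F_3$ are determined by unions of $3$-cyclotomic cosets modulo $n$. In case~(1) the relation $3^m \equiv -1 \pmod{n}$ makes the coset $C_1$ of $1$ have size exactly $2m$ and be closed under the involution $j\mapsto -j$; the cyclic code whose defining set is $T = C_1$ therefore has dimension $n - 2m$. In case~(2) the order of $3$ modulo $n$ equals $m$, and the assumption that $m$ is odd guarantees $-1 \notin C_1$, so $C_1$ and $C_{-1}$ are disjoint $3$-cyclotomic cosets of size $m$ each, and taking $T = C_1 \cup C_{-1}$ again yields a code of dimension $n - 2m$.

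The main obstacle is verifying that each of these cyclic codes has minimum distance at least $5$. The defining set $T$ contains few consecutive integers, so the ordinary BCH bound is typically too weak; instead I would invoke a sharper bound such as the Hartmann--Tzeng or Roos bound, exploiting the reversibility $T = -T$ and the arithmetic-progression structure of $T$. As a reassuring check, the smallest nontrivial instance of case~(2)---namely $m=3$, $n=13$, where $T$ is exactly the set of quadratic residues modulo $13$---reproduces the classical ternary quadratic-residue $[13,7,5]$ code, so the pattern is consistent with well-known quasi-perfect constructions. Once the bound $d\geq 5$ is secured, Theorem~\ref{th:linearcode} converts each code into the desired lattice packing and the density computation above completes the proof.
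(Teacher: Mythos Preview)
Your high-level strategy is exactly the paper's: apply Theorem~\ref{th:linearcode} and Corollary~\ref{cor:linearcode} with $p=3$ to a ternary linear $[n,n-2m,5]$ code, and then rewrite $3^{2m}$ as $(2n\mp 1)^2$. The paper, however, does not build those codes itself; it simply invokes the Gashkov--Sidel'nikov family \cite{GashkovSidelnikov1986} for $n=(3^m+1)/2$ and the Danev--Dodunekov family \cite{DanevDodunekov2008} for $n=(3^m-1)/2$, both of which are known to have minimum distance $5$.

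Where your proposal differs is that you try to reconstruct these codes from scratch as reversible cyclic codes with defining set $C_1$ (respectively $C_1\cup C_{-1}$). Your dimension count is correct and the sanity check at $n=13$ indeed recovers the ternary QR code. But the step you yourself flag as ``the main obstacle'' is left open: you only \emph{name} Hartmann--Tzeng or Roos as candidate tools without showing that the defining set actually meets their hypotheses for every $m$. As you observed for $n=13$, the set $C_1\cup C_{-1}$ does not contain four consecutive residues, so the ordinary BCH bound gives only $d\geq 3$; pushing to $d\geq 5$ is precisely the non-trivial content of the two cited papers, and it requires a genuine argument (in those references, a careful analysis of low-weight codewords, not merely a one-line bound). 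Until that verification is supplied, the proposal is an outline rather than a proof; the cleanest fix is to do what the paper does and cite \cite{GashkovSidelnikov1986} and \cite{DanevDodunekov2008} for the existence of the required $[n,n-2m,5]_3$ codes.
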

\begin{IEEEproof}
For the first case, we take a $[(3^m+1)/2, (3^m+1)/2-2m,5]_3$ code
from \cite{GashkovSidelnikov1986} as the input of
Theorem~\ref{th:linearcode} to obtain the lattice packing, while for
the second, we take a $[(3^m-1)/2, (3^m-1)/2-2m,5]_3$ code from
\cite{DanevDodunekov2008} as the input.
\end{IEEEproof}

We note that \cite{LiHelleseth2016} presented some binary
quasi-perfect linear codes with minimum distance $5$, which can give
rise to packings of $\cB(n,2,1,0)$. It has been checked out that the
corresponding densities are asymptotically the same as that in
Corollary~\ref{cor:lc}, i.e.,
$\frac{1}{2}+o(1)$. \cite{LiHelleseth2016} also studied $p$-ary
quasi-perfect linear codes with $p\geq 3$. However, the minimum
distances of those codes are no more that $4$. So they cannot be used
to obtain packings of $\BALL$ with $t \geq 2$.

The following theorem uses non-linear codes to construct non-lattice
packing, the proof of which is the same as that of \cite[Theorem
  5]{CasSchBohBru10} and we omit here to avoid unnecessary repetition.

\begin{theorem}
  \label{th:nonlinearcode}
  In the Hamming metric space, let $C$ be a $q$-ary $(n,M,2t+1)$
  code. Denote
  \[ V\eqdef \set*{ \vv\in\Z^n ; (\vv\bmod q)\in C}.\]
  If $\kp+\km < q$, then for any distinct $\vv,\vv'\in V$, we have
  $(\vv+\BALL) \cap (\vv'+\BALL)=\varnothing$, namely, $\BALL$ can
  pack $\Z^n$ by $V$.
\end{theorem}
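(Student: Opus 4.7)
The plan is a standard contradiction argument that mirrors the linear case of Theorem~\ref{th:linearcode}, keyed on the minimum distance of $C$ and the magnitude bound $\kp+\km<q$.

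First I would suppose, for contradiction, that there exist distinct $\vv,\vv'\in V$ and error vectors $\ve,\ve'\in\BALL$ with $\vv+\ve=\vv'+\ve'$. Writing $\vc\eqdef\vv\bmod q$ and $\vc'\eqdef\vv'\bmod q$, both lie in $C$, and reducing the identity $\vv-\vv'=\ve'-\ve$ modulo $q$ gives $\vc-\vc'\equiv\ve'-\ve\pmod q$.

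Next I would exploit the magnitude constraint coordinate-wise. Since each entry of $\ve$ and $\ve'$ lies in $[-\km,\kp]$, each entry of $\ve'-\ve$ lies in $[-(\kp+\km),\kp+\km]$, and because $\kp+\km<q$, a coordinate of $\ve'-\ve$ is $\equiv 0\pmod q$ if and only if it is exactly $0$, i.e., if and only if $e_i=e'_i$. This yields two cases. If $\vc=\vc'$, then every coordinate of $\ve'-\ve$ is $\equiv 0\pmod q$, hence $\ve=\ve'$, whence $\vv=\vv'$, contradicting distinctness (note: $\vv\ne\vv'$ in $\Z^n$ even if they project to the same codeword, but then the same magnitude argument applied to $\vv-\vv'=\ve'-\ve$ forces $\vv-\vv'=\Zero$).

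Finally I would handle the case $\vc\ne\vc'$: the number of coordinates on which they disagree equals the number of coordinates on which $\ve'-\ve$ is nonzero, which is at most $\wt(\ve)+\wt(\ve')\leq 2t$. This contradicts the minimum distance $d(\vc,\vc')\geq 2t+1$ guaranteed by $C$ being an $(n,M,2t+1)$ code. Both branches give a contradiction, completing the proof. There is no real obstacle here; the only subtlety worth stating carefully is the equivalence ``$e'_i-e_i\equiv 0\pmod q \Longleftrightarrow e_i=e'_i$'', which is precisely where the hypothesis $\kp+\km<q$ is used, exactly as in the linear analogue.
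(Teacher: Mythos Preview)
Your argument is correct and is precisely the standard contradiction route: reduce $\vv-\vv'=\ve'-\ve$ modulo $q$, use $\kp+\km<q$ to identify nonzero residues with nonzero integer entries, and contradict either distinctness (when $\vc=\vc'$) or the minimum distance $2t+1$ (when $\vc\neq\vc'$). The paper itself omits the proof, pointing to \cite[Theorem~5]{CasSchBohBru10}, and your write-up is exactly that omitted argument; the only cosmetic remark is that the parenthetical note in the $\vc=\vc'$ case is redundant, since $\ve=\ve'$ already forces $\vv=\vv'$ directly from $\vv-\vv'=\ve'-\ve$.
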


\begin{corollary}
  \label{cor:nonlinearcode}
  The density of the packing of $\Z^n$ constructed in
  Theorem~\ref{th:nonlinearcode} is
  \[\delta=\frac{M\cdot\sum_{i=0}^t \binom{n}{i}(\kp+\km)^i}{ q^{n} }.\]
\end{corollary}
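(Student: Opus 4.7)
The plan is to apply the definition of packing density directly, reducing the claim to a density computation for the set $V\subseteq\Z^n$. Theorem~\ref{th:nonlinearcode} already guarantees that $\BALL$ packs $\Z^n$ by $V$, so by the definition of density together with the formula $\abs*{\BALL}=\sum_{i=0}^t \binom{n}{i}(\kp+\km)^i$ recorded in the preliminaries, I have
\[\delta=\lim_{\ell\to\infty}\frac{\abs*{[-\ell,\ell]^n\cap V}\cdot\abs*{\BALL}}{\abs*{[-\ell,\ell]^n}}.\]
The entire task is thus to show that the natural density of $V$ inside $\Z^n$ equals $M/q^n$.

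First, I would exhibit a clean coset decomposition of $V$ modulo $q$. Identifying each codeword $\vc\in C$ with its unique representative in $\set*{0,1,\dots,q-1}^n$, the defining condition $V=\set*{\vv\in\Z^n ; (\vv\bmod q)\in C}$ yields the disjoint union
\[V=\bigsqcup_{\vc\in C}(\vc+q\Z^n),\]
with disjointness immediate from the fact that distinct codewords of $C$ remain non-congruent modulo $q$.

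Next, for each coset, $(\vc+q\Z^n)\cap[-\ell,\ell]^n$ factors as a product of $n$ one-dimensional arithmetic-progression intersections, each of cardinality $\tfrac{2\ell}{q}+O(1)$, so the coset contributes $(2\ell/q)^n+O(\ell^{n-1})$ points. Summing over the $M$ cosets and dividing by $\abs*{[-\ell,\ell]^n}=(2\ell+1)^n$ then drives the ratio to $M/q^n$ as $\ell\to\infty$, and multiplication by $\abs*{\BALL}$ gives the claimed density. There is no real obstacle in the argument: the only subtlety is ensuring that the $O(\ell^{n-1})$ error in the per-coset count is genuinely lower order than $(2\ell+1)^n$ in the denominator, which is automatic once the leading terms are matched up.
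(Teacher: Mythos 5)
Your proof is correct and follows essentially the same route as the paper: the paper simply notes that $V$ has period $q$ in each coordinate and reads off the density as $M/q^n$ times $\abs*{\BALL}$, while you make that periodicity argument explicit via the coset decomposition $V=\bigsqcup_{\vc\in C}(\vc+q\Z^n)$ and the limit computation. The added detail is harmless but the underlying idea is identical.
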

\begin{IEEEproof}
  Note that the set $V$ constructed in Theorem~\ref{th:nonlinearcode}
  has period $q$ in each coordinate. Thus, the packing density of
  $\Z^n$ is equal to the packing density of $\Z_q^n$, which is
  $\frac{M}{q^n}\sum_{i=0}^t \binom{n}{i}(\kp+\km)^i.$
\end{IEEEproof}

\begin{corollary}\label{cor:perparata}
  Let $m \geq 4$ be even integer and let $n=2^m-1$. Then $\Z^n$ can be
  packed by $\cB(n,2,1,0)$ with density
  \[\delta = \frac{ {n\choose 2} +n +1 }{(n+1)^2/2}=1-\frac{n-1}{(n+1)^2}.\]
\end{corollary}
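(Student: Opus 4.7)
The plan is to apply the non-linear-code packing machinery of Theorem~\ref{th:nonlinearcode} (with the density computed by Corollary~\ref{cor:nonlinearcode}), specialized to $q = 2$, $t = 2$, $\kp = 1$, and $\km = 0$. Since $\kp + \km = 1 < 2 = q$, the hypothesis of Theorem~\ref{th:nonlinearcode} is satisfied automatically, so any binary $(n, M, 5)$-code $C$ gives rise to a packing of $\Z^n$ by $\cB(n,2,1,0)$, with density
\[\delta = \frac{M \cdot \bigl(\binom{n}{2} + n + 1\bigr)}{2^n}.\]

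The code I would feed into this framework is a binary Preparata code of length $n = 2^m - 1$, size $M = 2^{n + 1 - 2m} = 2^{2^m - 2m}$, and minimum distance $5$, whose existence in the stated parity regime for $m$ is classical. Substituting $M = 2^{2^m - 2m}$ and $n + 1 = 2^m$ into the density formula yields
\[\delta = \frac{2^{2^m - 2m}\bigl(\binom{n}{2} + n + 1\bigr)}{2^{2^m - 1}} = \frac{\binom{n}{2} + n + 1}{(n+1)^2/2},\]
which is the first form asserted in the corollary. The equivalent closed form $1 - (n-1)/(n+1)^2$ then follows from the identity $\binom{n}{2} + n + 1 = (n^2 + n + 2)/2$ together with $(n+1)^2 - (n - 1) = n^2 + n + 2$.

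The only step with any substance is identifying the correct nonlinear code, namely recalling that a Preparata-type code with parameters $(2^m - 1,\, 2^{2^m - 2m},\, 5)$ exists for even $m \geq 4$; everything else is a direct invocation of Theorem~\ref{th:nonlinearcode} and Corollary~\ref{cor:nonlinearcode} followed by a short algebraic simplification. I do not expect any genuine obstacle beyond pinning down the reference for the Preparata code.
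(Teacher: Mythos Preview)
Your proposal is correct and matches the paper's own proof essentially verbatim: the paper likewise plugs the binary Preparata $(2^m-1,\,2^{2^m-2m},\,5)$ code into Theorem~\ref{th:nonlinearcode} and reads off the density via Corollary~\ref{cor:nonlinearcode}. There is nothing to add.
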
 
\begin{proof}
  We take a binary $(2^m-1, 2^{2^m-2m},5)$ Preparata code
  \cite{Preparata1968} as the input of Theorem~\ref{th:nonlinearcode}
  to obtain the packing.
\end{proof}

\subsection{Construction Based on $B_t[N;1]$ Sets for $(\kp,\km)=(1,0)$ or $(1,1)$}

A subset $A\subseteq\Z$ is called a \emph{$B_t[g]$ set} if every
integer can be written in at most $g$ different ways as a sum of $t$
(not necessary distinct) elements of $A$ (e.g., see
\cite[Section~4.5]{TaoVu06} and the many references therein). In this
section, however, we require $B_t[1]$ sets with a somewhat stronger
property. Specifically, a subset $A$ of $\Z_N$ is called a
\emph{$B_t[N;1]$ set} if the sums of any $t$ (not necessary distinct)
elements of $A$ are all different modulo $N$. Bose and Chowla
\cite{BoseChowla} presented two classes of $B_t[N;1]$ sets.

\begin{theorem}\cite[Theorem 1 and Theorem 2]{BoseChowla}\label{thm:Btset}
Let $q$ be a prime power and $t$ be a positive integer.  Let
$\alpha_0=0,\alpha_1,\alpha_2,\ldots,\alpha_{q-1}$ be all the
different elements of $\F_q$.
\begin{enumerate} 
\item
  Let $\xi$ be a primitive element of the extended field
  $\F_{q^t}$. For each $0\leq i \leq q-1$, let $d_i \in \Z_{q^{t}-1}$
  be such that
  \[\xi^{d_i}=\xi+\alpha_i.\]
  Then the set $S_1\eqdef \set*{d_i;0\leq i \leq q-1}$ is a
  $B_t[q^t-1;1]$ set of size $q$.
\item
  Let $\eta$ be a primitive element of the extended field
  $\F_{q^{t+1}}$. For each $0\leq i \leq q-1$, let $\beta_i \in \F_q$
  and $s_i \in \Z_{(q^{t+1}-1)/(q-1)}$ such that
  \[ \beta_i\eta^{s_i}=\eta+\alpha_i.\]
  Then the set $S_2\eqdef \set*{s_i;0\leq i \leq q-1} \cup \set{0}$ is a
  $B_t[(q^{t+1}-1)/(q-1);1]$ set of size $q+1$.
\end{enumerate}
\end{theorem}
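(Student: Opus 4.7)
The plan is to translate each $t$-fold sum-coincidence in $\Z_{q^t-1}$ (respectively, $\Z_N$ with $N=(q^{t+1}-1)/(q-1)$) into a product-coincidence in the extension field, and then invoke the degree of the minimal polynomial of the chosen primitive element to force the underlying multisets of $\alpha_i$'s to coincide.

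For Part~1, suppose $d_{i_1}+\cdots+d_{i_t}\equiv d_{j_1}+\cdots+d_{j_t}\pmod{q^t-1}$. Raising $\xi$ to both sides and using $\xi^{d_i}=\xi+\alpha_i$ gives
\[
  \prod_{k=1}^t(\xi+\alpha_{i_k})=\prod_{k=1}^t(\xi+\alpha_{j_k}) \qquad \text{in } \F_{q^t}.
\]
The two sides are monic polynomials in $\F_q[x]$ of degree $t$ evaluated at $\xi$, so their difference is a polynomial in $\F_q[x]$ of degree at most $t-1$ vanishing at $\xi$. Since $\xi$ is primitive in $\F_{q^t}$, its minimal polynomial over $\F_q$ has degree exactly $t$, forcing this difference to be zero. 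Hence the two monic polynomials coincide, and so do the multisets $\{\alpha_{i_k}\}$ and $\{\alpha_{j_k}\}$. Applying the same argument at $t=1$ (or a direct check) also gives $|S_1|=q$.

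For Part~2, observe that $N$ is precisely the order of the quotient $\F_{q^{t+1}}^*/\F_q^*$, so $\eta^N$ generates $\F_q^*$. Therefore an equality $s_{i_1}+\cdots+s_{i_t}\equiv s_{j_1}+\cdots+s_{j_t}\pmod{N}$ yields an equality of the corresponding products in $\F_{q^{t+1}}$ only up to a factor $c\in\F_q^*$. Each factor is either $\eta+\alpha_{i_k}$ (a ``real'' index, degree $1$) or $1$ (the extra element $0\in S_2$, contributing a degree-$0$ factor to the product). Let $a,b$ denote the number of extra-element occurrences on the two sides; the product coincidence reads $P(\eta)=c\,Q(\eta)$ with $P,Q\in\F_q[x]$ monic of degrees $t-a$ and $t-b$. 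The polynomial $P(x)-cQ(x)$ has degree at most $t<t+1$, the degree of the minimal polynomial of $\eta$ over $\F_q$, so it must vanish identically. Comparing degrees then forces $a=b$, comparing leading coefficients forces $c=1$, and $P=Q$ yields the multiset equality. The analogous $t=1$ check (together with the fact that $\eta+\alpha_i\notin\F_q$, which prevents any $s_i$ from equaling $0\bmod N$) gives $|S_2|=q+1$.

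The main obstacle is the projective wrinkle in Part~2: unlike Part~1, the additive congruence mod $N$ lifts to a multiplicative identity only up to an unknown scalar $c\in\F_q^*$, and the set $S_2$ contains the extra element $0$ whose corresponding ``factor'' has degree $0$ rather than $1$. The plan must therefore keep simultaneous track of $c$ and of the counts $a,b$; the crux is the elementary-but-key observation that, even after these complications, the polynomial $P(x)-cQ(x)$ in $\F_q[x]$ still has degree at most $t$, strictly below the degree $t+1$ of the minimal polynomial of $\eta$, so the minimal-polynomial argument applies as in the simpler Part~1.
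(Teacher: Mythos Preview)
The paper does not supply its own proof of this theorem; it is quoted verbatim from Bose--Chowla and used as a black box, so there is no in-paper argument to compare against.

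Your proof is correct and is essentially the classical Bose--Chowla argument. The key step in both parts---that a nonzero polynomial over $\F_q$ of degree strictly less than $[\F_{q^t}:\F_q]$ (respectively $[\F_{q^{t+1}}:\F_q]$) cannot vanish at a primitive element---is exactly what drives the original proofs. Your handling of Part~2 is also the standard one: the congruence modulo $N=(q^{t+1}-1)/(q-1)$ lifts to a multiplicative identity only up to a scalar in $\F_q^*$, and the extra element $0\in S_2$ contributes a degree-$0$ factor; both complications are absorbed by the fact that the resulting polynomial identity $P(x)=cQ(x)$ still has degree at most $t<t+1$, so the minimal-polynomial argument goes through, after which comparing degrees and leading coefficients pins down $a=b$ and $c=1$. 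One minor remark: in the size check for Part~1 you are implicitly using $t\geq 2$ so that $\xi\notin\F_q$ and hence $\xi+\alpha_i\neq 0$; this is harmless since the applications in the paper all assume $t\geq 2$.
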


\begin{theorem}\label{thm:con-Btset-1}
Let $A$ be a $B_t[N;1]$ set which contains $0$. Denote  $S\eqdef A\setminus\set{0}$. Then   $\Z_N \ge \set{1} \splt_t S$. 
\end{theorem}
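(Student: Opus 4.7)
The plan is to unpack the definition of partial $t$-splitting (Definition~\ref{def:split}) and reduce both required properties (distinctness of sums and non-vanishing) to the defining property of a $B_t[N;1]$ set by a simple zero-padding argument.

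First I would fix notation. Write $S=\{s_1,\dots,s_n\}$ where the $s_i$ are the distinct nonzero elements of $A$. For $\ve\in\{0,1\}^n$ the inner product $\ve\cdot(s_1,\dots,s_n)$ is just the subset sum $\sum_{i\in I_\ve} s_i$, where $I_\ve\eqdef\{i:\ve_i=1\}$. The requirement for $\Z_N\ge \{1\}\splt_t S$ is therefore that for every nonempty $I\subseteq[n]$ with $\abs{I}\le t$ the sum $\sum_{i\in I}s_i$ is nonzero modulo $N$, and that these sums are pairwise distinct modulo $N$ as $I$ varies.

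The key observation is that $0\in A$, so any such subset sum can be interpreted as a sum of exactly $t$ (not necessarily distinct) elements of $A$: namely, pad the multiset $\{s_i:i\in I\}$ with $t-\abs{I}$ copies of $0$ to obtain a multiset $M_I$ of cardinality $t$ drawn from $A$, whose elementwise sum equals $\sum_{i\in I}s_i$. Because the $s_i$ are distinct and nonzero, distinct nonempty subsets $I\ne J$ of $[n]$ (each of size $\le t$) produce distinct multisets $M_I\ne M_J$: their nonzero parts already differ. The $B_t[N;1]$ property then forces $\sum_{i\in I}s_i\not\equiv\sum_{j\in J}s_j\pmod N$, giving distinctness.

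For the nonzero condition, compare $M_I$ with the multiset consisting of $t$ copies of $0$; they differ (since $I$ is nonempty and the $s_i$ are nonzero), so again by $B_t[N;1]$ their sums differ modulo $N$, i.e.\ $\sum_{i\in I}s_i\not\equiv 0\pmod N$. I do not foresee a genuine obstacle here: the whole content of the lemma is the zero-padding trick that turns subset sums of size at most $t$ into sums of exactly $t$ elements of $A$, which is precisely the regime in which $B_t[N;1]$ offers its guarantee.
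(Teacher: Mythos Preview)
Your proof is correct and follows exactly the same zero-padding idea as the paper: complete a subset sum of at most $t$ nonzero elements of $A$ to a sum of exactly $t$ elements by adjoining copies of $0\in A$, and then invoke the $B_t[N;1]$ property. If anything, your write-up is more careful than the paper's, since you explicitly separate the distinctness and the nonvanishing conditions, whereas the paper's contradiction argument treats both at once by allowing one of the two ``distinct subsets'' to be empty.
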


\begin{IEEEproof}
Suppose to the contrary that
$\set*{s_{i_1},s_{i_2},\ldots,s_{i_{\ell}}}$ and
$\set*{s_{j_1},s_{j_2},\ldots,s_{j_{r}}}$ are two distinct
subsets of $S$ such that
\[s_{i_1}+s_{i_2}+\ldots+s_{i_{\ell}}\equiv s_{j_1}+s_{j_2}+\ldots+s_{j_{r}} \pmod{N},\]
where $\ell,r\leq t$.
Then we have 
\[ \underbrace{0+0+\cdots +0}_{t-\ell}+ s_{i_1}+s_{i_2}+\ldots+s_{i_{\ell}}\equiv \underbrace{0+0+\cdots +0}_{t-r}+s_{j_1}+s_{j_2}+\ldots+s_{j_{r}} \pmod{N},\]
which contradicts that   $S\cup \set{0}$ is a $B_t[N;1]$ set.
\end{IEEEproof}

The following result slightly improves upon the density obtained in
Corollary~\ref{cor:lc} for lattice packings of $\cB(n,t,1,0)$.

\begin{corollary}
  \label{cor:btsets}
  \begin{enumerate}
  \item
    If $n+1$ is a prime power, then for any $2\leq t \leq n$, there is
    a lattice packing of $\Z^n$ by $\cB(n,t,1,0)$ with density
    \[\delta=\frac{\sum_{i=0}^t\binom{n}{i} }{(n+1)^t-1}=\frac{1}{t!}+o(1).\]
  \item
    If $n$ is a prime power, then for any $2\leq t \leq n$, there is a
    lattice packing of $\Z^n$ by $\cB(n,t,1,0)$ with
    density
    \[\delta=\frac{\sum_{i=0}^t\binom{n}{i}}{(n^{t+1}-1)/(n-1)}=\frac{1}{t!}+o(1).\]
  \end{enumerate}
\end{corollary}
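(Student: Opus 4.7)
The plan is to feed the Bose--Chowla $B_t[N;1]$ sets from Theorem~\ref{thm:Btset} into Theorem~\ref{thm:con-Btset-1}, and then invoke Theorem~\ref{th:lattotile}(1). The setup matches: since $(\kp,\km)=(1,0)$, the multiplier set $M=[-\km,\kp]^*=\set*{1}$, which is exactly the splitter alphabet produced by Theorem~\ref{thm:con-Btset-1}. The asymptotic $\frac{1}{t!}+o(1)$ will drop out at the end from $\binom{n}{t}\sim n^t/t!$ for fixed $t$ and $n\to\infty$.

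For part (1), let $q\eqdef n+1$, a prime power by assumption. Theorem~\ref{thm:Btset}(1) yields a $B_t[q^t-1;1]$ set of size $q$ in $\Z_{q^t-1}$. The $B_t[N;1]$ property is invariant under translation (a common shift of $t$ summands cancels from both sides of any alleged collision), so I may translate this set to obtain a $B_t[q^t-1;1]$ set $A$ of size $n+1$ containing $0$. Applying Theorem~\ref{thm:con-Btset-1} to $A$ gives $\Z_{q^t-1}\ge\set*{1}\splt_t S$ with $S\eqdef A\setminus\set*{0}$ of size $n$. Theorem~\ref{th:lattotile}(1), applied with $G=\Z_{q^t-1}$, then produces a lattice $\Lambda$ for which $\cB(n,t,1,0)$ packs $\Z^n$ by $\Lambda$, with $\vol(\Lambda)\le (n+1)^t-1$, yielding the claimed density
\[
\delta=\frac{\sum_{i=0}^t\binom{n}{i}}{(n+1)^t-1}.
\]

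For part (2), take $q\eqdef n$ and apply Theorem~\ref{thm:Btset}(2). The resulting set $S_2$ already contains $0$ by construction and has size $q+1=n+1$, so no translation is needed. Applying Theorem~\ref{thm:con-Btset-1} with $A=S_2$ gives $\Z_N\ge\set*{1}\splt_t S$, where $N=(n^{t+1}-1)/(n-1)$ and $S=S_2\setminus\set*{0}$ has size $n$. Theorem~\ref{th:lattotile}(1) then furnishes the lattice packing with the density in the statement, which again evaluates to $\frac{1}{t!}+o(1)$.

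I do not anticipate a substantive obstacle, since the argument is a routine concatenation of Theorems~\ref{thm:Btset}, \ref{thm:con-Btset-1}, and \ref{th:lattotile}(1). The two bookkeeping points worth flagging are (i) verifying the translation-invariance of the $B_t[N;1]$ property in part (1) so that $0\in A$ can be arranged, and (ii) noting that writing the density as an equality (rather than a lower bound) requires that the image of the natural map $\phi\colon\Z^n\to\Z_N$ exhausts $\Z_N$, which one checks by observing that the (shifted) Bose--Chowla set still generates the ambient cyclic group.
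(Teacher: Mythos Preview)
Your proposal is correct and follows essentially the same route as the paper: translate the Bose--Chowla set so that it contains $0$, apply Theorem~\ref{thm:con-Btset-1}, and then Theorem~\ref{th:lattotile}(1). Your flag (ii) about surjectivity of $\phi$ is a nice extra bit of care that the paper's proof does not make explicit (the discussion after Theorem~\ref{th:lattotile} only guarantees $\delta\geq\abs{\cB}/\abs{G}$).
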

\begin{IEEEproof}
  Note that if $\set*{s_1,s_2,\ldots,s_n}$ is a $B_t[N;1]$ set, then
  $\set*{0,s_2-s_1,s_3-s_1,\ldots,s_n-s_1}$ is also a $B_t[N;1]$ set,
  which contains $0$. Hence, combining Theorem~\ref{thm:Btset} and
  Theorem~\ref{thm:con-Btset-1}, together with
  Theorem~\ref{th:lattotile}, we prove the claim.
\end{IEEEproof}

In general, we do not have an efficient decoding scheme for the
lattice code obtained from Theorem~\ref{thm:con-Btset-1}. However, for
the lattice code $\Lambda_{S_2\setminus\set{0}}$ obtained from the
$B_t[(q^{t+1}-1)/(q-1);1]$ set $S_2$ in Theorem~\ref{thm:Btset}, we
have the following decoding algorithm (summarized in
Algorithm~\ref{alg:decoding}). Let $n=q$ and let
$S_2\setminus\set{0}=\set{s_0,\dots,s_{n-1}}$ be defined as in
Theorem~\ref{thm:Btset}. Let $\vx\in\Lambda_{S_2\setminus\set{0}}$ be
a codeword and $\vy \in \vx+\cB(n,t,1,0)$ be the channel output.  Then
$\vx\cdot (s_0,s_1,\ldots,s_{n-1})=0$, and $\vy-\vx$ is a binary
vector over $\set{0,1}$ of weight at most $t$. Let
$i_1,i_2,\ldots,i_r$ be the indices of the nonzero bits of $\vy-\vx$,
and denote $s=\vy\cdot (s_0,s_1,\ldots,s_{n-1})$. We aim to recover
$i_1,i_2,\ldots,i_r$ from $s$.  Since
\begin{align*}
s&=\vy\cdot (s_0,s_1,\ldots,s_{n-1}) = \vy\cdot (s_0,s_1,\ldots,s_{n-1}) -0 =\vy\cdot (s_0,s_1,\ldots,s_{n-1})-\vx\cdot (s_0,s_1,\ldots,s_{n-1}) \\
&= (\vy-\vx)\cdot (s_0,s_1,\ldots,s_{n-1}) =\sum_{\ell=1}^r s_{i_\ell},
\end{align*}
we have that 
\begin{equation}\label{eq:decode}
\parenv*{\prod_{\ell=1}^r \beta_{i_\ell}}  \eta^s=  \prod_{\ell=1}^r\parenv*{\beta_{i_\ell} \eta^{s_{i_\ell}}} =\prod_{\ell=1}^r(\eta+\alpha_{i_\ell}).
\end{equation}
Let $p(x)$ be the primitive polynomial of $\eta$ and $r(x) = x^s \bmod p(x)$. Then $r(\eta)=\eta^s$, and we substitute this in \eqref{eq:decode} to obtain
\[\parenv*{\prod_{\ell=1}^r \beta_\ell}  r(\eta)=\prod_{\ell=1}^r(\eta+\alpha_{i_\ell}).\]
Since both the polynomials $(\prod_{\ell=1}^r \beta_{i_\ell}) r(x)$
and $\prod_{\ell=1}^r(x+\alpha_{i_\ell})$ are over $\F_q$ and have
degrees at most $t$, they should be the same; otherwise, $\eta$ is a
root of a nonzero polynomial of degree at most $t$, which contradicts
the fact that $\eta$ is a primitive element of $\F_{q^{t+1}}$. Thus,
we may solve $(\prod_{\ell=1}^r \beta_\ell) r(x)$ to find out
$\alpha_{i_1}, \alpha_{i_1},\ldots, \alpha_{i_r}$. Finally, we can subtract
$\sum_{\ell=1}^r \ve_{i_\ell}$ from $\vy$ to obtain $\vx$.

\begin{algorithm}[ht]
  \caption{Decoding algorithm for $\Lambda_{S_2\setminus\set{0}}$ from Theorem~\ref{thm:Btset}}
  \label{alg:decoding}
  \begin{algorithmic}[1]
    \Statex \textbf{Input:} received vector $\vy\in\Z^n$ suffering at most $t$ errors
    \Statex \quad\quad\quad $S_2\setminus\set{0}=\set{s_0,\dots,s_{n-1}}$ from Theorem~\ref{thm:Btset}
    \Statex \quad\quad\quad where $\eta$ is a root of a primitive polynomial $p(x)$ of degree $t+1$
    \Statex \quad\quad\quad and where $\F_q\setminus\set{0}=\set{\alpha_1,\dots,\alpha_{q-1}}$.
    \Statex \textbf{Output:} codeword $\vx\in\Lambda_{S_2\setminus\set{0}}$ such that $\vy\in\vx+\cB(n,t,1,0)$
    \State $s \gets \vy \cdot (s_0,s_1,\ldots,s_{n-1})$ \label{step:s}
    \State $r(x)\gets x^s \bmod p(x)$ \label{step:r}
    \For{$1\leq i\leq q-1$} \label{step:loop}
    \If{$r(\alpha_i)=0$}
    \State $\vy \gets \vy - \ve_i$
    \EndIf
    \EndFor
    \State \Return{$\vy$}
  \end{algorithmic}
\end{algorithm}

Let us analyze the time complexity of Algorithm~\ref{alg:decoding},
where we count the number of field operations in $\F_q$. The inner
product in Step~\ref{step:s} takes $O(n)$
operations. Step~\ref{step:r} is possible to compute in $O(t^2 \log
s)$ field operations (by using successive squaring and multiplication
by $x$ as necessary, taking a modulo $p(x)$ after each
iteration). Since $s \in \Z_{(q^{t+1}-1)/(q-1)}$ and $n=q$, it is
$O(t^3 \log n)$. Finally, the root search loop starting in
Step~\ref{step:loop} takes $O(tn)$ operations. Thus, in total, the
time complexity $O(tn+t^3\log n)$ field operations. If $t$ is
constant, then this is linear in the code length. As a final comment,
we point out that $q=O(n)$, and thus the basic field operations of
addition and multiplication may be realized in $O(\mathrm{polylog}(n))$ time.

We now move from packing $\cB(n,t,1,0)$ to packing $\cB(n,t,1,1)$. In
general, we note that one can use a $B_{h}[N;1]$ set with
$h=t(\kp+\km)$ to obtain a lattice packing of $\BALL$ in $\Z_n$ for
$\kp+\km \geq 2$. However, in this case, the density is
$O(n^{t(1-{\kp+\km})})$, which vanishes when $n$ tends to
infinity. Similarly, the lattice packing from Corollary~\ref{cor:lc}
also has vanishing density $O(n^{t-\ceil{2t(1-{1}/p)}})$.  In the
following, we give a modified construction which uses a $B_{t}[N;1]$
set to obtain a lattice packing of $\cB(n,t,1,1)$ with density
$\Omega(1)$.

\begin{theorem}\label{thm:con-Btset-pm1}
Let $A=\set{a_1,a_2,\dots,a_n}$ be a $B_t[N;1]$ set. In the group
$\Z_N \times \Z_{2t+1}$, construct a set
\[S\eqdef \set*{ (a_i,1) ; a_i \in A}.\]
Then $\Z_N \times \Z_{2t+1} \ge \set{-1,1} \splt_t S$. 
\end{theorem}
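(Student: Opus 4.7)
I would verify both requirements of Definition~\ref{def:split}(1) at once. Suppose $\ve,\ve'\in\{-1,0,1\}^n$ with $\wt(\ve),\wt(\ve')\leq t$ satisfy
$\ve\cdot(s_1,\dots,s_n)=\ve'\cdot(s_1,\dots,s_n)$ in $G=\Z_N\times\Z_{2t+1}$, where $s_i=(a_i,1)$; the non-zeroness clause is just the case $\ve'=\Zero$. Setting $\vd\eqdef\ve-\ve'\in\{-2,-1,0,1,2\}^n$ and unpacking the two coordinates gives
\[\sum_{i=1}^n d_i a_i\equiv 0\pmod N\qquad\text{and}\qquad\sum_{i=1}^n d_i\equiv 0\pmod{2t+1}.\]
The triangle inequality gives $\sum_i|d_i|\leq\wt(\ve)+\wt(\ve')\leq 2t$, so $\bigl|\sum_i d_i\bigr|\leq 2t<2t+1$; combined with the second congruence this forces $\sum_i d_i=0$ in $\Z$. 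Decomposing $\vd=\vd^+-\vd^-$ into nonnegative parts with disjoint support, the common integer $s\eqdef\sum_i d_i^+=\sum_i d_i^-$ then satisfies $2s=\sum_i|d_i|\leq 2t$, hence $s\leq t$.

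With this balanced decomposition in hand, the first congruence rewrites as $\sum_i d_i^+ a_i\equiv\sum_i d_i^- a_i\pmod N$, where each side is a sum with multiplicity of exactly $s$ elements of $A$. I would pad both sides with $t-s$ copies of $a_1$, yielding two representations of the same residue modulo $N$ as sums of exactly $t$ elements of $A$. The $B_t[N;1]$ property forces these two multisets of $t$ summands to coincide; removing the common padding and using that the $a_i$ are pairwise distinct forces $d_i^+=d_i^-$ for every $i$. Since $\vd^+$ and $\vd^-$ have disjoint support, each must be identically zero, so $\vd=\Zero$ and $\ve=\ve'$. This simultaneously yields the distinctness of the sums $\ve\cdot(s_1,\dots,s_n)$ over distinct admissible $\ve$, and, applied with $\ve'=\Zero$, their non-zeroness (since $\wt(\ve)\geq 1$).

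The technical point I would pause over, and which is really the reason the construction works, is the coupling between the two factors of $G$: the $\Z_{2t+1}$ component is introduced precisely so that the signed sum of at most $2t$ entries of $A$ appearing in the $\Z_N$-congruence is forced to be \emph{balanced} (equal number of positive and negative contributions, each at most $t$), and it is only in that balanced form that the $B_t[N;1]$ hypothesis—which controls unsigned $t$-sums—can be invoked. Everything else is routine bookkeeping once the bound $\sum_i|d_i|\leq 2t<2t+1$ has been observed; the only subtlety is to ensure that the padding by copies of a fixed $a_1$ is applied identically on both sides so that it cancels after invoking the $B_t[N;1]$ property.
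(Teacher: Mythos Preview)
Your proof is correct and follows essentially the same approach as the paper's: use the $\Z_{2t+1}$ coordinate to force the signed sum to be balanced, then rearrange the $\Z_N$ congruence into an equality of two sums of at most $t$ elements of $A$ and invoke the $B_t[N;1]$ property. Your formulation via the difference vector $\vd=\ve-\ve'$ and its positive/negative parts is a clean repackaging of the paper's explicit index-shuffling, and you spell out the concluding step (that the multisets coinciding forces $\vd=\Zero$) more carefully than the paper does.
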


\begin{IEEEproof}
Suppose to the contrary that there are
$(a_{i_1},1),(a_{i_1},1),\dots,(a_{i_{\ell}},1)$ and
$(a_{j_1},1),(a_{j_2},1),\dots,(a_{j_{r}},1)$ in $S$ such that
\begin{equation}\label{eq:con-Bset}
\sum_{m=1}^{\ell'} (a_{i_m},1)-\sum_{m=\ell'+1}^{\ell}(a_{i_m},1) =
\sum_{m=1}^{r'} (a_{j_m},1)-\sum_{m=r'+1}^{r}(a_{j_m},1),
\end{equation}
where $0\leq \ell'\leq \ell \leq t$ and $0\leq r'\leq r \leq t$, and
the addition is over the group $\Z_N \times \Z_{2t+1}$.

The second coordinate of the equation above implies that
\[\ell-2\ell'\equiv r-2r' \pmod{2t+1}.\]
Since $0\leq \ell'\leq \ell \leq t$ and $0\leq r'\leq r \leq t$, we
have $\ell-2\ell', r-2r'\in [-t,t]$. It follows that $\ell-2\ell'=
r-2r',$ and so, $\ell'+r-r'=r'+\ell-\ell'$. Let
$\tau\eqdef\ell'+r-r'$. Then
\[\tau = \frac{\ell'+r-r'+r'+\ell-\ell'}{2} =\frac{\ell+r}{2} \leq t.\]
Rearranging the terms in the first coordinate of the equation
\eqref{eq:con-Bset}, we have
\begin{equation}\label{eq:con-Bset-2}
a_{i_1}+ a_{i_2}+\cdots+ a_{i_{\ell'}} + a_{j_{r'+1}} + \cdots + a_{j_{r}} \equiv  a_{j_1}+ a_{j_2}+\cdots+ a_{j_{r'}} + a_{i_{\ell'+1}} + \cdots + a_{i_{\ell}} \pmod{N}.
\end{equation}
On both side of \eqref{eq:con-Bset-2}, there are $\tau$ terms. This contradicts the fact that $A$ is a $B_{t}[N;1]$ (and hence  a $B_{\tau}[N;1]$ set for any $\tau \leq t$).
\end{IEEEproof}

Combining Theorem~\ref{thm:Btset} and Theorem~\ref{thm:con-Btset-pm1}, together with Theorem~\ref{th:lattotile}, we have the following result.

\begin{corollary}
  \label{cor:kpm1}
  If $n$ is a prime power, then for any $2\leq t \leq n$, there is a
  lattice packing of $\Z^n$ by $\cB(n,t,1,1)$ with density
  \[\delta=\frac{\sum_{i=0}^t\binom{n}{i}2^i }{(2t+1)(n^t-1)}=\frac{2^t}{t!(2t+1)}+o(1).\] 
  If $n-1$ is a prime power, then for any $2\leq t \leq n$, there is a
  lattice packing of $\Z^n$ by $\cB(n,t,1,1)$ with density
  \[\delta=\frac{\sum_{i=0}^t\binom{n}{i} 2^i}{(2t+1)((n-1)^{t+1}-1)/(n-2)}=\frac{2^t}{t!(2t+1)}+o(1).\]
\end{corollary}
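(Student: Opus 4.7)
The corollary is a direct assembly of three ingredients that are already in hand. First, Theorem~\ref{thm:Btset} provides explicit $B_t[N;1]$ sets of prescribed sizes whenever $q$ is a prime power. Second, Theorem~\ref{thm:con-Btset-pm1} converts any such $B_t[N;1]$ set of cardinality $n$ into a partial $t$-splitting
\[
\Z_N\times\Z_{2t+1}\ \ge\ \set{-1,1}\splt_t S,
\]
where $\abs{S}=n$. Third, since $[-\km,\kp]^*=\set{-1,1}$ when $\kp=\km=1$, Theorem~\ref{th:lattotile}(1) turns this partial splitting into a lattice packing of $\Z^n$ by $\cB(n,t,1,1)$ whose density satisfies $\delta\ge \abs{\cB(n,t,1,1)}/\abs{G}$ with $\abs{G}=(2t+1)N$.

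\textbf{First bullet.} With $n$ a prime power, apply part~(1) of Theorem~\ref{thm:Btset} with $q=n$ to obtain a $B_t[n^t-1;1]$ set of size $n$. Feeding this into Theorem~\ref{thm:con-Btset-pm1} and then Theorem~\ref{th:lattotile}(1) yields a lattice packing of $\Z^n$ by $\cB(n,t,1,1)$ with $\abs{G}=(n^t-1)(2t+1)$, which is precisely the first density formula.

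\textbf{Second bullet.} With $n-1$ a prime power, apply part~(2) of Theorem~\ref{thm:Btset} with $q=n-1$ to obtain a $B_t\bigl[((n-1)^{t+1}-1)/(n-2);1\bigr]$ set of size $q+1=n$. The same two-step pipeline (Theorem~\ref{thm:con-Btset-pm1} then Theorem~\ref{th:lattotile}(1)) produces a lattice packing with $\abs{G}=(2t+1)\bigl((n-1)^{t+1}-1\bigr)/(n-2)$, matching the second density formula.

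\textbf{Asymptotics.} The only remaining check is that $\delta = 2^t/(t!(2t+1))+o(1)$. In both cases the numerator is dominated by the $i=t$ term, $\binom{n}{t}2^t\sim 2^t n^t/t!$, and the denominator is asymptotic to $(2t+1)n^t$, since in the second case $((n-1)^{t+1}-1)/(n-2)=(n-1)^t\bigl(1+1/(n-2)\bigr)(1+o(1))\sim n^t$. Taking the quotient gives the stated limit. There is no real obstacle here: the combinatorial content is entirely absorbed by Theorem~\ref{thm:con-Btset-pm1}, and the rest is bookkeeping.
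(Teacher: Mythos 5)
Your proposal is correct and follows exactly the route the paper takes: the paper derives this corollary by the one-line combination of Theorem~\ref{thm:Btset} (with $q=n$ or $q=n-1$), Theorem~\ref{thm:con-Btset-pm1}, and Theorem~\ref{th:lattotile}, which is precisely your pipeline. Your explicit bookkeeping of $\abs{G}=(2t+1)N$ and the asymptotics merely fills in details the paper leaves implicit.
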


\subsection{Constructions for $t=2$}

Whereas in the previous section we considered unconstrained $t$ but
only small values of $\kp,\km$, in this section we focus on the case
of $t=2$ but unconstrained $\kp,\km$.

We first present a construction based on $k$-fold Sidon sets. Such
sets were first defined in~\cite{Lazebnik2003} as a generalization of
Sidon sets.  We repeat the definition here. Let $k$ be a positive
integer and let $N$ be relatively prime to all elements of $[1,k]$,
i.e., $\gcd(N,k!)=1$. Fix integers $c_1,c_2,c_3,c_4\in [-k,k]$ such
that $c_1+c_2+c_3+c_4=0$, and let $\cS$ be the collection of sets $S
\subseteq \set{1,2,3,4}$ such that $\sum_{i \in S} c_i=0$ and $c_i\neq
0$ for $i \in S$. We note that $\cS$ always contains the empty
set. Consider the following equation over $x_1,x_2,x_3,x_4\in \Z_N$:
\begin{equation}\label{eq:kfoldSidonset}
c_1x_1+c_2x_2+c_3x_3+c_4x_4\equiv 0 \pmod{N}.
 \end{equation}
A solution of \eqref{eq:kfoldSidonset} is \emph{trivial} if there
exists a partition of the set $\set{i ; c_i\neq 0}$ into sets $S,T\in
\cS$ such that $x_i=x_j$ for all $i,j\in S$ and all $i,j\in T$. We now
define a \emph{$k$-fold Sidon set} to be a set $A\subseteq\Z_N$ such
that for any $c_1,c_2,c_3,c_4\in [-k,k]$ with $c_1+c_2+c_3+c_4=0$,
equation \eqref{eq:kfoldSidonset} has only trivial solutions in
$A$. In the special case of $k=1$, a $1$-fold Sidon set coincides with
the usual definition of a \emph{Sidon set}, which is also a $B_2[N;1]$
set.

\begin{theorem}\label{thm:con-kfoldSidonset}
Let $A=\set{a_1,a_2,\dots,a_n}\subseteq\Z_n$ be a $k$-fold Sidon
set. Assume that $0\leq \km\leq \kp\leq k$ and $\kp+\km\geq 1$. In the
group $G \triangleq \Z_{2(\kp+\km)+1} \times \Z_N$, construct a set
\[S\eqdef \set*{ (1,x) ; x \in A}.\]
Then $G \ge [-\km,\kp]^* \splt_2 S$. 
\end{theorem}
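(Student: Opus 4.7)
The plan is to prove that any equality $\ve \cdot (s_1,\dots,s_n) = \ve' \cdot (s_1,\dots,s_n)$ in $G$, with $\ve, \ve' \in ([-\km,\kp]^* \cup \{0\})^n$ both of Hamming weight at most $2$, must force $\ve = \ve'$; this subsumes both distinctness and non-vanishing at once. I would peel off the two coordinates of $G$ in turn: the first cyclic factor pins down the coefficient sums as integers, and the second feeds into the $k$-fold Sidon hypothesis on $A$.

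Looking at the $\Z_{2(\kp+\km)+1}$ coordinate first gives $\sum_i e_i \equiv \sum_i e'_i$. Since each weight-at-most-$2$ sum lies in $[-2\km, 2\kp]$, the difference is confined to $[-2(\kp+\km), 2(\kp+\km)]$, which contains no non-zero multiple of the modulus $2(\kp+\km)+1$; hence the congruence lifts to the integer identity $\sum_i e_i = \sum_i e'_i$.

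Turning to the $\Z_N$ coordinate, I would rewrite $\sum_i e_i a_i \equiv \sum_i e'_i a_i \pmod N$ in the form of Equation~\eqref{eq:kfoldSidonset}. The contributions $(e_i, a_i)$ for $i \in \mathrm{supp}(\ve)$ and $(-e'_j, a_j)$ for $j \in \mathrm{supp}(\ve')$ are assembled into four coefficient-element pairs, padding with zero-coefficient pairs (and any $x \in A$) if needed. Each coefficient lies in $[-\kp,\kp] \subseteq [-k,k]$ because $\kp \le k$, and the preceding integer identity forces them to sum to zero. The $k$-fold Sidon property of $A$ then declares the assignment trivial: the non-zero-coefficient indices admit a partition into two sets in $\cS$ on which the $x$-values are constant.

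The proof closes by inspecting which such partitions are actually compatible. Singleton parts are ruled out because $\{\ell\} \in \cS$ would force $c_\ell = 0$; and any part pooling two or more $a_i$'s whose indices $i$ are distinct is ruled out because $A$ has distinct entries. When $(\wt(\ve),\wt(\ve'))$ is anything other than $(2,2)$, these two obstructions exhaust the candidate partitions and produce a contradiction, forcing $\ve = \ve'$. The decisive case is $(2,2)$ with all four coefficients non-zero: the partition $(\emptyset,\{1,2,3,4\})$ would equate four distinct elements of $A$, and $(\{1,2\},\{3,4\})$ would equate $a_{i_1}=a_{i_2}$, both impossible, so only the two ``cross-matching'' partitions $(\{1,3\},\{2,4\})$ and $(\{1,4\},\{2,3\})$ survive. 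In each, the $x$-equalities identify the supports of $\ve$ and $\ve'$, and the $\cS$-conditions match the corresponding coefficients, so $\ve = \ve'$. I expect this exhaustive partition analysis in case $(2,2)$ to be the main obstacle; once it is in hand, the claim $G \ge [-\km,\kp]^* \splt_2 S$ follows from Definition~\ref{def:split}.
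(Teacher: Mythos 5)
Your proposal is correct and follows essentially the same route as the paper's proof: the $\Z_{2(\kp+\km)+1}$ coordinate lifts the coefficient congruence to an integer identity, the $\Z_N$ coordinate is cast as an instance of \eqref{eq:kfoldSidonset} whose triviality is then dissected via the admissible partitions, using that the two nonzero entries on each side come from distinct elements of $A$. The paper organizes the final case analysis by the number of zero coefficients rather than by $(\wt(\ve),\wt(\ve'))$ and partition shape, but the content is the same.
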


\begin{IEEEproof}
Suppose to the contrary that $G$ is not partially $2$-split by
$S$. Then there are $x_1,x_2,x_3,x_4 \in A$ and $c_1,c_2,c_3,c_4 \in
[-\km,\kp]$ such that
\[ c_1+c_2\equiv c_3+c_4 \pmod{2(\kp+\km)+1},\]
and 
\begin{equation}\label{eq:con-kfoldSidonset}
c_1 x_1+c_2 x_2\equiv c_3 x_3 + c_4 x_4 \pmod{N},
\end{equation}
where all the following hold:
\begin{enumerate}
\item
  $x_1\neq x_2$
\item
  $x_3\neq x_4$
\item
  $x_1\neq x_3$ if $c_1=c_3$ and $c_2=c_4=0$
\item
  $x_2\neq x_4$ if $c_2=c_4$ and $c_1=c_3=0$
\item
  $(x_1,x_2)\neq (x_3,x_4)$ if $(c_1,c_2)=(c_3,c_4)$.
\item
  $(x_1,x_2)\neq (x_4,x_3)$ if $(c_1,c_2)=(c_4,c_3)$.
\end{enumerate}
Since $-(\kp+\km) \leq a+b,c+d \leq \kp+\km$, it follows that
$c_1+c_2=c_3+c_4$, or equivalently, $c_1+c_2-c_3-c_4=0$.  To avoid
contradicting the assumption that $A$ is a $k$-fold Sidon set,
$(x_1,x_2,x_3,x_4)$ should be a trivial solution of
\eqref{eq:con-kfoldSidonset}. We consider the following cases:

\textbf{Case 1.}  If none of $c_1,c_2,c_3,c_4$ are $0$, we consider
the possible partitions of $\set{1,2,3,4}$. Since $x_1\neq x_2$ and
$x_3\neq x_4$, $1$ and $2$, respectively $3$ and $4$, cannot be placed
in the same set in the partition. Then the possible partitions are
$\set{\set{1,3}, \set{2,4}}$, and $\set{\set{1,4}, \set{2,3}}$. If the
partition is $\set{\set{1,3}, \set{2,4}}$, then $c_1-c_3=0$ and
$c_2-c_4=0$. It follows that $x_1=x_3$ and $x_2=x_4$, which
contradicts that $(x_1,x_2)\neq (x_3,x_4)$ when
$(c_1,c_2)=(c_3,c_4)$. The case of $\set{\set{1,4},\set{2,3}}$ is
proved symmetrically.

\textbf{Case 2.}  If there is exactly one element of $c_1,c_2,c_3,c_4$
that is equal to $0$, say w.l.o.g., $c_1=0$, then the only possible
partition of $\set{2,3,4}$ is $\set{ \emptyset, \set{2,3,4}}$, which
contradicts $x_3\neq x_4$.

\textbf{Case 3.}  If there are exactly two elements of
$c_1,c_2,c_3,c_4$ that are equal to $0$, w.l.o.g., we may consider the
two cases where $c_1=c_2=0$, and $c_1=c_3=0$. If $c_1=c_2=0$, the only
possible partition of $\set{3,4}$ is $\set{ \emptyset, \set{3,4} }$,
which contradicts $x_3\neq x_4$. If $c_1=c_3=0$, the only possible
partition of $\set{2,4}$ is $\set{ \emptyset, \set{2,4} }$. Then we
have $x_2=x_4$. Note that $c_1=c_3=0$ and $c_2=c_4$, and we get a
contradiction.

\textbf{Case 4.}  If there are exactly three elements of
$c_1,c_2,c_3,c_4$ that are equal to $0$, assume w.l.o.g., that
$c_2=c_3=c_4=0$ and $c_1\neq 0$. Then we need to partition
$\set{1}$. However, such a partition does not exist as $c_1\neq
0$. Thus, there is no solution to (\refeq{eq:con-kfoldSidonset}).
\end{IEEEproof}

When $k=2$, a family of $2$-fold Sidon sets is constructed in \cite{Lazebnik2003} by removing some elements from Singer difference sets with multiplier $2$. 

\begin{theorem}[Theorem~2.5 in \cite{Lazebnik2003}]\label{thm:2foldSidonset}
Let $m$ be a positive integer and $N=2^{2^{m+1}}+2^{2^m}+1$. Then
there exists a $2$-fold Sidon set $A\subseteq \Z_N$ such that
\[\abs{A} \geq \frac{1}{2} N^{1/2}-3. \] 
\end{theorem}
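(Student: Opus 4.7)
The plan is to construct $A$ as a carefully selected subset of a Singer difference set, exploiting the fact that $2$ is a numerical multiplier. Set $q = 2^{2^m}$, so that $N = q^2 + q + 1$, and let $D \subseteq \Z_N$ be the classical Singer $(N, q+1, 1)$-difference set obtained from the action of a generator of $\F_{q^3}^* / \F_q^*$ on the points of $\mathrm{PG}(2, q)$. Because every nonzero residue has a unique representation as $a - b$ with $a, b \in D$, the set $D$ is already a Sidon set (i.e., a $1$-fold Sidon set). My goal is to cut $D$ down to a subset $A$ satisfying the stronger $2$-fold Sidon condition while losing only a constant factor in size.

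The key structural input is the multiplier theorem for difference sets: since $q$ is a power of $2$, $\gcd(2, N) = 1$, and $\lambda = 1 < 2$, the prime $2$ is a numerical multiplier of $D$. Concretely, there exist $s \in \Z_N$ and a permutation $\sigma$ of $D$ with $2y \equiv \sigma(y) + s \pmod{N}$ for every $y \in D$. This identity is the lever that turns every coefficient-$\pm 2$ term into a coefficient-$\pm 1$ term.

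Next I would enumerate the finitely many coefficient patterns $(c_1, c_2, c_3, c_4) \in [-2, 2]^4$ with $c_1 + c_2 + c_3 + c_4 = 0$, up to permutation of indices and global sign flip. For each pattern I would substitute $2y \equiv \sigma(y) + s$ into every $\pm 2$ term, obtaining an identity with coefficients in $\{-1, 0, +1\}$ in the variables and their $\sigma$-images, shifted by an integer multiple of $s$. Applying the Sidon property of $D$ to the rewritten identity restricts the solutions to a short list of configurations: each such configuration either matches the definition of a trivial solution, or forces a relation of the form ``$x_i$ and $\sigma(x_j)$ are both selected,'' or pins the variables to a small $\sigma$-invariant subset of $D$ determined by $s$.

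The main technical obstacle is to choose $A \subseteq D$ that avoids every non-trivial configuration simultaneously. The plan is to take $A$ to be a $\sigma$-antichain, namely a set containing at most one element of each consecutive pair $\{y, \sigma(y)\}$ along every $\sigma$-orbit. An alternating selection along each orbit gives $|A| \geq \lfloor (q+1)/2 \rfloor$, and this selection destroys every non-trivial solution for the patterns that involve two coefficients of absolute value $2$, such as $(2, 1, -2, -1)$, since those solutions always force some $y \in A$ together with $\sigma(y) \in A$. The remaining low-support patterns, such as $(2, -1, -1, 0)$, admit only an absolute-constant number of bad configurations in $D$, which can be eliminated by at most three further deletions from $A$. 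Since $N = q^2 + q + 1$ implies $q > N^{1/2} - 1$, the resulting set satisfies
\[
|A| \geq \tfrac{1}{2}(q+1) - 3 \geq \tfrac{1}{2} N^{1/2} - 3,
\]
matching the claimed bound.
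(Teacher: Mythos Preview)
The paper does not prove this theorem; it is quoted verbatim as Theorem~2.5 of Lazebnik--Verstra\"ete and used as a black box. The only hint the paper gives about its proof is the one-line remark preceding the statement: ``a family of $2$-fold Sidon sets is constructed in \cite{Lazebnik2003} by removing some elements from Singer difference sets with multiplier $2$.'' Your outline is therefore consistent with the approach the paper alludes to, and there is nothing further in the paper to compare against.

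That said, as a stand-alone proof sketch your proposal has real gaps. The heart of the argument is the case analysis over coefficient patterns $(c_1,c_2,c_3,c_4)\in[-2,2]^4$ with $\sum c_i=0$, and you dispose of it in two sentences. The assertion that an alternating selection along $\sigma$-orbits ``destroys every non-trivial solution for the patterns that involve two coefficients of absolute value $2$'' is not self-evident: after the substitution $2y\equiv\sigma(y)+s$, a pattern like $(2,-2,1,-1)$ becomes $\sigma(x_1)-\sigma(x_2)+x_3-x_4\equiv 0$, and the Sidon property of $D$ forces equalities among $\sigma(x_1),\sigma(x_2),x_3,x_4$, not directly a relation of the form $x_i=\sigma(x_j)$ inside $A$. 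You need to trace each pattern through carefully and show that every non-trivial solution either contradicts the antichain choice or lands in a bounded exceptional set; as written, neither claim is justified. Likewise, the statement that the patterns with a single $\pm 2$ (e.g., $(2,-1,-1,0)$) contribute only $O(1)$ bad configurations, removable by ``at most three further deletions,'' is asserted without argument; the precise constant $3$ in the bound $\tfrac{1}{2}N^{1/2}-3$ depends on exactly this count, so you cannot wave it away. If you intend this as a genuine proof rather than a pointer to \cite{Lazebnik2003}, the case analysis must be written out.
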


We immediately get the following corollary.

\begin{corollary}
  \label{cor:t2sidon}
  Let $0\leq \km\leq \kp\leq 2$ be integers with $\kp+\km\geq
  1$. There is an infinite family of integers $n$ such that $\Z^n$ can
  be lattice packed by $\cB(n,2,\kp,\km)$ with density
  \[\delta=\frac{\binom{n}{2}(\kp+\km)^2 + n(\kp+\km)+1 }{(2(\kp+\km)+1)(2n+6)^2}=\frac{1}{8(2(\kp+\km)+1)}+o(1).\] 
\end{corollary}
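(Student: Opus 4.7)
The plan is to chain together three results already in hand. The size constraint $0 \leq \km \leq \kp \leq 2$ in the statement matches precisely the requirement $\kp \leq k$ of Theorem~\ref{thm:con-kfoldSidonset} with $k=2$, so the appropriate ingredient is a $2$-fold Sidon set, for which Theorem~\ref{thm:2foldSidonset} supplies an explicit infinite family.

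Concretely, for each positive integer $m$, set $N \eqdef 2^{2^{m+1}} + 2^{2^m} + 1$ and invoke Theorem~\ref{thm:2foldSidonset} to obtain a $2$-fold Sidon set $A \subseteq \Z_N$ with $|A| \geq \tfrac{1}{2} N^{1/2} - 3$. Put $n \eqdef |A|$, which then satisfies $2n+6 \geq N^{1/2}$, i.e., $N \leq (2n+6)^2$. Since $\kp \leq 2$, Theorem~\ref{thm:con-kfoldSidonset} applies with $k=2$ and yields the partial $2$-splitting
\[
G \eqdef \Z_{2(\kp+\km)+1} \times \Z_N \;\geq\; [-\km,\kp]^* \splt_2 S,
\]
where $S = \set*{(1,x) ; x \in A}$ has exactly $n$ elements. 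Feeding this into Theorem~\ref{th:lattotile} produces a lattice $\Lambda \subseteq \Z^n$ by which $\cB(n,2,\kp,\km)$ packs $\Z^n$, of density
\[
\delta \;\geq\; \frac{|\cB(n,2,\kp,\km)|}{|G|} \;=\; \frac{\binom{n}{2}(\kp+\km)^2 + n(\kp+\km) + 1}{(2(\kp+\km)+1)\, N}.
\]
Substituting $N \leq (2n+6)^2$ yields the closed-form expression in the corollary, and the asymptotic estimate follows by comparing leading-order $n^2$ terms in the numerator and denominator. Letting $m$ range over the positive integers provides the claimed infinite family of code lengths, since $n \geq \tfrac{1}{2} N^{1/2} - 3 \to \infty$.

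None of the steps is genuinely difficult, because the corollary is essentially an off-the-shelf assembly of the Lazebnik--Verstra\"ete construction with the $k=2$ case of Theorem~\ref{thm:con-kfoldSidonset}. The only point that deserves explicit care is the size-to-density bookkeeping: one must pass from the ambient modulus $N$ (in which the Sidon set lives) to the code length $n$ (the size of that Sidon set) via the inequality $N \leq (2n+6)^2$, which is what brings the $(2n+6)^2$ factor into the final density formula.
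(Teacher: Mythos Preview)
Your proof is correct and follows exactly the same approach as the paper, which simply says to combine Theorem~\ref{thm:con-kfoldSidonset} with Theorem~\ref{thm:2foldSidonset}. Your write-up just makes the bookkeeping explicit, in particular the passage from the modulus $N$ to the code length $n$ via $N\leq (2n+6)^2$.
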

\begin{IEEEproof}
Simply combine Theorem~\ref{thm:con-kfoldSidonset} and
Theorem~\ref{thm:2foldSidonset}.
\end{IEEEproof}

Now, we present a construction for $t=2$ and $0\leq \km\leq \kp\leq
3$, which combines Behrend's method \cite{Behrend46} and Ruzsa's
method \cite{Ruzsa93} to forbid some specified linear equations.

\begin{theorem}\label{thm:con-t=2}
  Let $0\leq \km\leq \kp\leq 3$ be integers such that $\kp+\km\geq 1$.  Set
  $\alpha \eqdef \max\set{2\kp^2,3}$. Let $D \geq 2$ and $K\geq 1$ be integers,
  and $p\equiv\pm 5\pmod{12}$ be a prime such that $(\alpha K+1)^D
  \leq p$. For each $0\leq m < DK^2$, define
\[C_m\eqdef  \set*{ x= \sum_{i=0}^{D-1} x_i (\alpha K+1)^i ; 0\leq x_i\leq K, \sum_{i=0}^{D-1}x_i^2=m }. \]
Let $G \eqdef  \Z_{3\kp+2\km+1} \times \Z_p\times \Z_p$, and construct a subset 
\[ S_m\eqdef \set*{ s_x ; x \in C_m }, \textup{ where  }  s_x \eqdef (1,x,x^2) \in G.\]
If $\kp\leq 3$, then $G\geq M\splt_2 S_m$ for every $0\leq m < DK^2$,
and where $M\eqdef [-\km,\kp]^*$.
\end{theorem}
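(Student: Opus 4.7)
The plan is to use the three coordinates of $G=\Z_{3\kp+2\km+1}\times\F_p\times\F_p$ to derive progressively stronger integer identities on the digit vectors of the $x_i$, and then play them against the shell constraint $\sum_j x_{i,j}^2=m$ built into $C_m$.

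Suppose for contradiction that $\ve\cdot(s_x)_{x\in C_m}=\ve'\cdot(s_x)_{x\in C_m}$ for two distinct $\ve,\ve'\in(M\cup\set{0})^n$ with $1\le\wt(\ve),\wt(\ve')\le 2$, and set $\vd\eqdef\ve-\ve'$, supported on $I\eqdef\mathrm{supp}(\ve)\cup\mathrm{supp}(\ve')$ with $|I|\le 4$. From the first coordinate, $\sum_{i\in I}d_i\equiv 0\pmod{3\kp+2\km+1}$; since $|\sum d_i|\le 2(\kp+\km)<3\kp+2\km+1$, this lifts to $\sum d_i=0$ in $\Z$. Writing each $x_i=\sum_{j=0}^{D-1}x_{i,j}B^j$ with $B\eqdef\alpha K+1$ and setting $\vec x_i\eqdef(x_{i,0},\ldots,x_{i,D-1})$, $Y_j\eqdef\sum_i d_i x_{i,j}$, the identities $\sum d_i=0$ and $\sum|d_i|\le 4\kp$ yield $\sum_{d_i>0}d_i=\sum_{d_i<0}|d_i|\le 2\kp$, so $|Y_j|\le 2\kp K<B$. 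The second coordinate $\sum d_ix_i\equiv 0\pmod p$, combined with
\[\Bigl|\sum_i d_ix_i\Bigr|\le \frac{2\kp(B^D-1)}{\alpha}<\frac{2\kp B^D}{\alpha}\le\frac{p}{\kp}\le p\]
(which uses $\alpha\ge 2\kp^2$ and $B^D\le p$), then forces $\sum d_ix_i=0$ in $\Z$; uniqueness of base-$B$ expansion with $|Y_j|<B$ upgrades this further to $\sum_i d_i\vec x_i=\vec 0$ in $\Z^D$.

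If $|I|\le 3$, the three coordinates modulo $p$ form a Vandermonde system whose $|I|\times|I|$ submatrix has nonzero determinant (the $x_i<B^D\le p$ are distinct in $[0,p)$), so $\vd\equiv\vec 0\pmod p$, and since $|d_i|<p$ also $\vd=\vec 0$, contradicting $\ve\ne\ve'$. Thus one is reduced to $|I|=4$, where $d_i\ne 0$ for all $i\in I$, $\sum d_i=0$, and $|d_i|\le\kp+\km\le 6$, leaving only finitely many coefficient patterns---this is where $\kp\le 3$ enters.

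For each such pattern I will combine the third-coordinate identity $\sum d_ix_i^2\equiv 0\pmod p$ with the integer identities already derived and with the shell constraint $\|\vec x_i\|^2=m$. Taking $\vd=(1,1,-1,-1)$ as the prototype, squaring $x_{i_1}+x_{i_2}=x_{j_1}+x_{j_2}$ and subtracting the third-coordinate identity gives $2(x_{i_1}x_{i_2}-x_{j_1}x_{j_2})\equiv 0\pmod p$, and since $p\equiv\pm 5\pmod{12}$ is odd, dividing by $2$ shows $\{x_{i_1},x_{i_2}\}$ and $\{x_{j_1},x_{j_2}\}$ are root sets of the same monic quadratic over $\F_p$; as the $x_i\in[0,p)$, these sets coincide, contradicting the distinctness of the four positions. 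The remaining patterns are handled by analogous algebraic manipulations---where $\gcd(p,6)=1$, again guaranteed by $p\equiv\pm 5\pmod{12}$, is used to cancel small integer prefactors---producing modular identities on bounded bilinear expressions in the $x_i$ that lift, via the same base-$B$ bound, to integer identities of the form $\sum\mu_i\vec x_i=\vec 0$ with small $\mu_i$; the shell together with Cauchy--Schwarz then forces two of the $\vec x_i$ to coincide, again contradicting distinctness. The main obstacle is exactly this last lifting step: the third-coordinate identity is a priori only modular, and $|\sum d_ix_i^2|$ is too large to bound directly below $p$, so the integer identities already extracted must be used to cancel bulk terms before the base-$B$ argument can be applied.
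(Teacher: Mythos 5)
Your setup and the easy cases are sound, and in one place cleaner than the paper's own argument: lifting $\sum_i d_i\equiv 0$ and $\sum_i d_i x_i\equiv 0$ to integer identities, passing to digits of base $B=\alpha K+1$, killing every support of size at most $3$ by the invertibility of the $3\times 3$ Vandermonde matrix mod $p$, and disposing of the three--one sign pattern by the convexity/sphere argument (the paper instead grinds through Cases 1--5 by hand, using collinearity-versus-sphere for its Case 4 where your Vandermonde step suffices). The prototype $(1,1,-1,-1)$ is also handled correctly.

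The genuine gap is exactly where you flag ``the main obstacle'': the general two--two sign pattern $as_x+bs_y=cs_u+ds_v$ with $0<a,b,c,d\le \kp$ and $a+b=c+d$ (the paper's Case 6). Here the first- and second-coordinate integer identities alone cannot force a coincidence (four distinct points on a sphere can satisfy $a\vec{x}+b\vec{y}=c\vec{u}+d\vec{v}$, e.g.\ an inscribed rectangle), so the third coordinate is indispensable; but combining the three congruences only yields
\[ ab(x-y)^2\equiv cd(u-v)^2 \pmod{p},\]
a \emph{quadratic} congruence in quantities of size up to roughly $p/\alpha$, so $(x-y)^2$ is of order $p^2$ and no base-$B$ size bound can lift it to $\Z$; nor can ``cancelling small prefactors via $\gcd(p,6)=1$'' linearize it. The paper's resolution is a quadratic-residue argument: the displayed congruence forces $abcd$ to be a square mod $p$; for $\kp\le 3$ with $a+b=c+d$ the possible products are $\{1,4,9,16,36,81,12\}$, and $p\equiv\pm 5\pmod{12}$ makes $3$ (hence $12$) a non-residue, so either one gets an immediate contradiction or $abcd=t^2$ is a perfect square in $\Z$, whence $\pm t(x-y)\equiv cd(u-v)\pmod{p}$ is a genuine \emph{linear} relation that can be solved against $ax+by\equiv cu+dv$ and lifted digit-wise to produce three collinear points on the sphere. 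Your proposal never extracts this square root, and it misattributes the hypothesis $p\equiv\pm 5\pmod{12}$ to coprimality with $6$ (which holds for every prime $p>3$) rather than to the non-residuosity of $3$; this is also precisely where $\kp\le 3$ enters, since for $\kp\ge 4$ one obtains products such as $12$, $24$, $72$ that cannot all be non-residues simultaneously. Without this step the hardest family of coefficient patterns remains unproved.
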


\begin{IEEEproof}
  Suppose to the contrary that
  $G$ is not partially $2$-split by $S_m$. We consider the following
  cases.

\textbf{Case 1.} $a s_x =\Zero$ for some $a \in M$ and $x\in C_m$. The
first coordinate of this equation is $a\equiv 0
\pmod{3\kp+2\km+1}$. Since $-\km\leq a \leq \kp$, necessarily $a=0$, a
contradiction.

\textbf{Case 2.} $a s_x =b s_y$ for some $a,b \in M$, $x,y \in C_m$
and $(a,x)\neq (b,y)$. Similarly to Case 1, the first coordinate implies
that $a=b$. From the second coordinate, we have $a x \equiv b y
\pmod{p}$, and so, $x\equiv y \pmod{p}$. It follows that $x =y$ as
$0\leq x,y <p$, which contradicts $(a,x)\neq (b,y)$.
 
\textbf{Case 3.} $a s_x + b s_y = \Zero$ for some $a,b \in M$, $x,y
\in C_m$ and $x\neq y$. The first coordinate implies $a+b=0$ as $-2\km
\leq a+b\leq 2\kp$. W.l.o.g., we assume $a>0$. Then $a s_x= (-b) s_y$,
where $0< a, -b \leq \km$, which was ruled out in Case 2.

\textbf{Case 4.}  $a s_x + b s_y = c s_u$ for some $a,b,c \in M$ and
$x,y,u\in C_m$ with $x\not =y$. From the first coordinate, we have
$a+b=c$ as $-2\km \leq a+b\leq 2\kp$ and $-\km\leq c \leq \kp$.  If
$x=u$ or $y=u$, then $b s_y=bs_u$ or $a s_x=as_u$, respectively, both
of which were ruled out in Case 2. Thus, in the following, we assume
$x,y,u$ are pairwise distinct. Furthermore, using the condition
$a+b=c$ and rearranging the terms, we may assume that $a,b,c>0$.

From the second coordinate, we have that $a x + by \equiv c u \pmod{p}$, or equivalently,
\[  \sum_{i=0}^{D-1} (ax_i+b y_i) (\alpha K+1)^i \equiv  \sum_{i=0}^{D-1} c u_i (\alpha K+1)^i   \pmod{p}.\]
Note that $0\leq ax_i+b y_i, cu_i\leq 2\kp K <\alpha K+1$ and $p \geq
(\alpha K+1)^D$.  It follows that $ax_i+b y_i=cu_i$ for all $0\leq
i\leq D-1$. Thus, the three distinct points $(x_0,x_1,\ldots,
x_{D-1})$, $(y_0,y_1,\ldots, y_{D-1})$, and $(u_0,u_1,\ldots,
u_{D-1})$, are collinear in $\Z^D$ where $D\geq 2$, which contradicts
the fact that they are on the same sphere, i.e., $\sum_i x_i^2=\sum_i
y_i^2=\sum_i u_i^2=m$.

\textbf{Case 5.}  $a s_x + b s_y = c s_u +d s_v$ for some $a,b,c,d \in
M$, $x,y,u,v\in C_m$, $x\neq y$ and $u\neq v$, where $abcd$ is
negative. By rearranging the terms, we may assume w.l.o.g.  that
\[a s_x + b s_y  + c s_z = d s_u\] for some $0<a,b,c,d \leq \kp$ and $x,y,z,u\in C_m$ where $x,y,z,u$ are not all the same.

Note that $0<a+b+c\leq 3\kp$ and $0< d \leq \kp$. From the first
coordinate of the equation above we have $a+b+c=d$.  The second
coordinate of the equation implies that
\[  \sum_{i=0}^{D-1} (ax_i+b y_i+cz_i) (\alpha K+1)^i \equiv  \sum_{i=0}^{D-1} d u_i (\alpha K+1)^i   \pmod{p}.\]
Since $0\leq ax_i+b y_i+ cz_i\leq 3 \kp K < \alpha K+1$ and $0\leq
du_i\leq \kp K <\alpha K+1$, necessarily $ax_i+by_i+cz_i=d u_i$ for
all $0\leq i \leq D-1$. Then
\begin{align*}
ax_i^2+by_i^2+cz_i^2 & = a (x_i-u_i+u_i)^2 + b (y_i-u_i+u_i)^2 +c (z_i-u_i+u_i)^2 \\
&=a (x_i-u_i)^2 +2a(x_i-u_i)u_i+au_i^2 +b (y_i-u_i)^2 +2b(y_i-u_i)u_i+bu_i^2  \\
&\  \ \ +c (z_i-u_i)^2 +2c(z_i-u_i)u_i+cu_i^2 \\
& = a (x_i-u_i)^2 +b (y_i-u_i)^2  +c (z_i-u_i)^2+2(ax_i+by_i+cz_i)u_i \\
& \ \ \  -(a+b+c) u_i^2 \\
& = a (x_i-u_i)^2 +b (y_i-u_i)^2  +c (z_i-u_i)^2 +(a+b+c) u_i^2.
\end{align*}  
Note that $x,y,z,u\in C_m$, i.e., $\sum_{i=0}^{D-1}x_i^2=\sum_{i=0}^{D-1}y_i^2=\sum_{i=0}^{D-1}z_i^2=\sum_{i=0}^{D-1}u_i^2$. It follows that
\begin{align*}
(a+b+c) \sum_{i=0}^{D-1}u_i^2 &=  a \sum_{i=0}^{D-1}x_i^2+b \sum_{i=0}^{D-1}y_i^2+c \sum_{i=0}^{D-1}z_i^2 \\
& = a  \sum_{i=0}^{D-1}(x_i-u_i)^2 +b \sum_{i=0}^{D-1} (y_i-u_i)^2  +c  \sum_{i=0}^{D-1}(z_i-u_i)^2 +(a+b+c)  \sum_{i=0}^{D-1}u_i^2,
\end{align*}
which in turn implies that $x_i=y_i=z_i=u_i$ for all $0\leq i\leq D-1$, and so, $x=y=z=u$, a contradiction.

\textbf{Case 6.}  $a s_x + b s_y  = c s_u +d s_v$ for some $a,b,c,d \in M$, $x,y,u,v\in C_m$, $x\neq y$ and $u \neq v$, where $abcd$ is positive. Note that from the first coordinate, we have $a+b=c+d$. By rearranging the terms, we may assume that 
\[a s_x + b s_y  = c s_u + d s_v\] for some $0<a,b,c,d \leq \kp$, $a+b=c+d$, $x,y,u,v\in C_m$ and $x,y,u,v$ are not all the same.
The second coordinate and the third coordinate of the equation above imply that
\begin{equation}\label{eq:1order}
ax+by\equiv cu+dv \pmod{p}
\end{equation}
and 
\begin{equation}\label{eq:2order}
ax^2+by^2\equiv cu^2+dv^2 \pmod{p}.
\end{equation}
We multiply \eqref{eq:2order} by $a+b$, and then subtract the square of \eqref{eq:1order}.  Noting that $a+b=c+d$, the result is 
\begin{equation}\label{eq:dif}
ab(x-y)^2\equiv cd (u-v)^2 \pmod{p}.
\end{equation}

If $x= y$, using \eqref{eq:1order} and \eqref{eq:dif}, it is easy to
see that $x,y,u,v$ are all the same, a contradiction; if $x=u$, then
\eqref{eq:1order} was ruled out in Case 4.  Thus, we may assume that
$x,y,u,v$ are pairwise distinct, and so,
\begin{equation}\label{eq:square}
abcd\equiv c^2d^2(u-v)^2/ (x-y)^2\pmod{p},
\end{equation}
i.e., $abcd$ should be a quadratic residue modulo $p$. 

Check all the possible $abcd$, where $0<a,b,c,d\leq \kp\leq 3$ and $a+b=c+d$. We have $abcd\in\set{1,2^2,3^2,4^2,6^2,9^2,3\cdot 2^2}$. Since $p\equiv \pm 5\pmod{12}$, $3$ is not a quadratic residue modulo $p$, and so, $abcd\in\set{1,2^2,3^2,4^2,6^2,9^2}$. In all of these cases, $abcd$ is a square in $\Z$.  Denote $t=\sqrt{abcd}$. Since $0<a,b,c,d\leq \kp$, we have $0<t\leq \kp^2$. Substituting $abcd=t^2$ in \eqref{eq:square} yields
\begin{equation}\label{eq:sqrt}
\pm t(x-y)\equiv cd(u-v) \pmod{p}.
\end{equation}

Solving the system of equations \eqref{eq:1order} and \eqref{eq:sqrt}, we get
\begin{equation*}
(c^2+cd)u\equiv (\pm t +ac)x+(bc\mp t)y \pmod{p}.
\end{equation*}
Note that $a+b=c+d$. Hence,
\begin{equation}\label{eq:sol} 
\sum_{i=0}^{D-1} (ac+bc)u_i (\alpha K+1)^i \equiv  \sum_{i=0}^{D-1}( (\pm t +ac)x_i+(bc\mp t)y_i )(\alpha K+1)^i \pmod{p}.
\end{equation}
Since $(\pm t +ac)+(bc\mp t)=ac+bc>0$, at least one of $\pm t +ac$ and $bc\mp t$ is positive. 
We proceed in the following  subcases.
\begin{enumerate}
\item If $\pm t +ac=0$, we have $t=ac$, and so,
\[\sum_{i=0}^{D-1} (ac+bc)u_i (\alpha K+1)^i \equiv \sum_{i=0}^{D-1}  (bc+ac)y_i (\alpha K+1)^i \pmod{p},\]
which in turn implies $u=y$, a contradiction.

Similarly, if $bc\mp t=0$, we can get $u=x$, again, a contradiction.
\item If both $\pm t +ac$ and $bc\mp t$ are positive, then $0\leq (\pm t +ac)x_i+(bc\mp t)y_i \leq (ac+bc) K \leq 2\kp^2 K\leq \alpha K$. On the other hand, $0\leq (ac+bc)u_i \leq 2\kp^2 K \leq \alpha K$. Thus it follows from \eqref{eq:sol} that 
\[(ac+bc)u_i= (\pm t +ac)x_i+(bc\mp t)y_i \textup{ for all } 0\leq i \leq D-1.\] That is,   the three distinct points $(x_0,x_1,\ldots, x_{D-1}), (y_0,y_1,\ldots, y_{D-1})$ and $(u_0,u_1,\ldots, u_{D-1})$ of $\F_p^D$ are collinear, which contradicts the fact that they are on the same sphere.
\item If $\pm t +ac$ is negative, then $bc\mp t$ is positive. Rearranging the terms in \eqref{eq:sol}, we have that 
\[\sum_{i=0}^{D-1} ((ac+bc)u_i-(\pm t +ac)x_i) (\alpha K+1)^i \equiv  \sum_{i=0}^{D-1}(bc\mp t)y_i (\alpha K+1)^i \pmod{p}.\] 
Since 
\begin{align*}
0 & \leq (ac+bc)u_i-(\pm t +ac)x_i = (ac+bc)u_i+(\mp t -ac)x_i  \\
 & \leq  (ac+bc)K+(\mp t -ac)K =  (bc \mp t) K \leq 2k_+^2 K \leq \alpha K,
\end{align*} 
then
\[(ac+bc)u_i- (\pm t +ac)x_i =(bc\mp t)y_i \textup{ for all } 0\leq i \leq D-1.\] 
Again we get three distinct points on the same sphere which are collinear, a contradiction.  
\item  If $bc\mp t$ is negative, then $\pm t +ac$ is positive. Using the same argument as above, we can get the contradiction.
\end{enumerate}
Thus we complete our proof.
\end{IEEEproof}

\begin{remark}
In the proof above, the product $abcd$ is required to be either a square of $\Z$ or a non quadratic residue modulo $p$. This requirement comes from Ruzsa's method, in the proof of Theorem of 7.3 of  \cite{Ruzsa93}. However, for $\kp\geq 4$, this requirement cannot be satisfied: we may choose  $(a,b,c,d)$ to be $(1,4,2,3), (1,3,2,2)$ or $(2,4,3,3)$, the products $24,12$ and $72$ are not squares  and they cannot simultaneously be non quadratic residues modulo $p$ for any prime $p$ as, using the Legendre symbol,
\[ \parenv*{\frac{6}{p}}=\parenv*{\frac{2}{p}}\parenv*{\frac{3}{p}}.\]
\end{remark}

\begin{corollary}
  \label{cor:behruz}
  Let $t=2$ and $0\leq \km \leq \kp \leq 3$, $\kp+\km\geq 1$.  There
  is an infinite family of $n$ such that $\Z^n$ can be lattice packed
  by $\BALL$ with density $\delta=\Omega(c^{-\sqrt{\ln n}})$ for some
  real number $c>0$.
\end{corollary}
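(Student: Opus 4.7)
The plan is to apply Theorem~\ref{thm:con-t=2} with parameters chosen \`a la Behrend, and then to optimize their relationship so that the resulting lattice packing density achieves the desired stretched-exponential decay. Fix $(\kp,\km)$ with $0\leq\km\leq\kp\leq 3$ so that $\alpha=\max\{2\kp^{2},3\}$ is a constant. For each sufficiently large integer $D$, I would choose a positive integer $K$ and a prime $p\equiv\pm 5\pmod{12}$ satisfying $(\alpha K+1)^{D}\leq p\leq C_{0}(\alpha K+1)^{D}$ for some absolute constant $C_{0}$; the existence of such a prime follows from the prime number theorem in arithmetic progressions, modulo $12$. Theorem~\ref{thm:con-t=2} then guarantees $G\geq M\splt_{2}S_{m}$ for every $0\leq m<DK^{2}$, where $G=\Z_{3\kp+2\km+1}\times\Z_{p}\times\Z_{p}$ and $M=[-\km,\kp]^{*}$.

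Next, because the base-$(\alpha K+1)$ digit representation puts $\{0,1,\ldots,K\}^{D}$ in bijection with its encoded integers, the sets $\{C_{m}\}_{m=0}^{DK^{2}-1}$ partition this cube along the spheres $\sum_{i}x_{i}^{2}=m$. By pigeonhole, some $C_{m^{*}}$ has size $n\eqdef|C_{m^{*}}|\geq(K+1)^{D}/(DK^{2})$. Invoking Theorem~\ref{th:lattotile} on the partial splitting $G\geq M\splt_{2}S_{m^{*}}$ yields a lattice packing of $\Z^{n}$ by $\cB(n,2,\kp,\km)$ with density
\[
\delta\geq\frac{|\cB(n,2,\kp,\km)|}{(3\kp+2\km+1)\,p^{2}}=\Theta\!\left(\frac{n^{2}}{p^{2}}\right),
\]
since $\kp,\km$ are constants.

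The remaining work is to bound $\ln(p/n)$ in terms of $\ln n$. From $\ln n\geq D\ln(K+1)-\ln(DK^{2})$ and $\ln p\leq D\ln(\alpha K+1)+O(1)$, one obtains $\ln(p/n)\leq D\ln\alpha+2\ln K+O(\ln D)$ for large $K$. Minimizing the upper bound $D\ln\alpha+2\ln K$ under the constraint $D\ln K\approx\ln n$ (Lagrange multipliers, or the standard Behrend balancing) gives the optimal choice $D=\Theta(\sqrt{\ln n/\ln\alpha})$ and $\ln K=\Theta(\sqrt{\ln n\cdot\ln\alpha})$, yielding $\ln(p/n)=O(\sqrt{\ln n})$. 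Hence $\delta=\Omega(e^{-c_{1}\sqrt{\ln n}})=\Omega(c^{-\sqrt{\ln n}})$ for a suitable constant $c>1$, and letting $D$ range over all sufficiently large integers provides the required infinite family of dimensions $n$. The main technical nuisance is the prime condition: for the chosen $(D,K)$ one must produce a prime of the correct residue class inside the window $[(\alpha K+1)^{D},C_{0}(\alpha K+1)^{D}]$. This is handled by any quantitative form of Dirichlet's theorem, at the cost of possibly perturbing $K$ by an $O(1)$ factor, which does not affect the asymptotic rate.
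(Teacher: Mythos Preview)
Your proposal is correct and follows essentially the same Behrend-style argument as the paper: apply Theorem~\ref{thm:con-t=2}, use pigeonhole over the spheres $\sum x_i^2=m$ to find a large $S_{m^*}$, and balance $D$ against $\ln K$ to make $\ln(p/n)=O(\sqrt{\ln n})$. The only cosmetic difference is the direction of parametrization: the paper fixes the prime $p\equiv\pm5\pmod{12}$ first and then sets $D=\lfloor\sqrt{\ln p}\rfloor$, $K=\lfloor(e^{D}-1)/\alpha\rfloor$, thereby avoiding any appeal to a quantitative Dirichlet theorem, whereas you fix $D$ first and must locate a suitable prime in a dyadic window---but this does not affect the asymptotics.
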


\begin{IEEEproof}
Let $p\equiv\pm 5\pmod{12}$ be a sufficiently large prime. Set
$D\eqdef \floor{\sqrt{\ln p}}$ and $K\eqdef
\floor{(e^D-1)/\alpha}$. Then we have $(\alpha K+1)^D \leq p$.
Consider the group $G$ and the splitting sets $S_m$, $m\in
[0,DK^2-1]$, in Theorem~\ref{thm:con-t=2}. According to the pigeonhole
principle, there exists one set $S_m$ of size $\geq
\frac{(K+1)^D}{DK^2}$. Then
\begin{align*}
n & \geq \frac{(K+1)^D}{DK^2}  = \frac{\parenv*{\floor*{\frac{e^D-1}{\alpha}}+1}^D }{D\floor*{\frac{e^D-1}{\alpha}}^2} \geq   \frac{\parenv*{\frac{e^D-\alpha}{\alpha}+1}^D }{D\parenv*{\frac{e^D-1}{\alpha}}^2}
\geq
\frac{\alpha^2 e^{D^2}}{\alpha^{D}D e^{2D} } \\
& \geq \frac{\alpha^2 e^{  \floor*{\sqrt{\ln p}}^2}}{\alpha^{\sqrt{\ln p}} \sqrt{\ln p} e^{2\sqrt{\ln p}} }  \geq  \frac{\alpha^2 e^{  \parenv*{\sqrt{\ln p}-1}^2}}{\alpha^{\sqrt{\ln p}} \sqrt{\ln p} e^{2\sqrt{\ln p}} }  =   \frac{e\alpha^2 p}{\alpha^{\sqrt{\ln p}} \sqrt{\ln p} e^{4\sqrt{\ln p}} }\\
& \geq  \frac{p}{c_1 ^{ \sqrt{\ln p}} },
\end{align*}
for some real number $c_1>0$.  Taking logarithm of both side, we have
$\ln n \geq \ln p -\sqrt{\ln p} \ln c_1$, or equivalently,
\[\ln p - \ln c_1 \sqrt{\ln p} - \ln n \leq 0.\]
Solving for $\sqrt{\ln p}$, we get 
\[\sqrt{\ln p} \leq \frac{\ln c_1 +\sqrt{ (\ln c_1)^2+4\ln n } }{2},\]
and so
\[ \ln p \leq \frac{4\ln n +2(\ln c_1)^2 +2 \ln c_1 \sqrt{ (\ln c_1)^2+4\ln n } }{4}\leq  \ln n +c_2 \sqrt{\ln n},\]
for some real number $c_2>0$. It follows that 
\[ n\geq  \frac{p}{e^{c_2\sqrt{\ln n}}},\]
and 
\[\delta \geq  \frac{ \binom{n}{2} (\kp+\km)^2+n(\kp+\km)+1  }{\abs{G}} = \Omega(c^{-\sqrt{\ln n}}),\]
for some real number $c>0$.
\end{IEEEproof}

\section{Generalized Packings}
\label{sec:genpack}

It is a common practice in coding theory to also consider list
decoding instead of unique decoding. In such scenarios, the channel
output is decoded to produce a list of up to $\lambda$ possible
distinct codewords, where the channel output is within the error balls
centered at each of these codewords. In this section, we therefore
generalize the concept of packing to work in conjunction with list
decoding. The trade-off we present here is that at the price of a
constant-sized list, $\lambda$, we can find lattice packings of
$\BALL$ with density almost constant, $\Omega(n^{-\epsilon})$, for any
$\epsilon>0$. The proof method, however, is non-constructive, and
relies on the probabilistic method.

Given a shape $\cB \subseteq \Z^n$ and a lattice $\Lambda \subseteq
\Z^n$, we say $\cB$ \emph{$\lambda$-packs $\Z^n$ by $\Lambda$} if for
every element $\vz\in\Z^n$, there are at most $\lambda$ distinct
elements $\vv_i\in\Lambda$ such that $\vz\in\vv_i+\cB$. Obviously, if
$\lambda=1$, this definition coincides with the packing defined in
Section~\ref{sec:prelim}.
  
Let $G$ be a finite Abelian group, $M\eqdef [-\km,\kp]^*$, and
$S\subseteq G$.  If each element of $G$ can be written in at most
$\lambda$ ways as a linear combination of $t$ elements of $S$ with
coefficients from $M \cup \set{0}$, then we say $G
\overset{\lambda}{\geq} M \splt_t S$.
  
The following result is an analogue of Theorem~\ref{th:lattotile},
which relates lattice packings to Abelian groups. The proof is exactly
analogous, and we omit it.
  
\begin{theorem}
  Let $G$ be a finite Abelian group and $M\eqdef
  [-\km,\kp]^*$. Suppose that there is a subset
  $S=\set*{s_1,s_2,\ldots,s_n}\subseteq G$ such that $G
  \overset{\lambda}{\geq} M \splt_t S$.  Define $\phi:\Z^n\to G$ as
  $\phi(\vx)\eqdef\vx\cdot(s_1,\dots,s_n)$ and let
  $\Lambda\eqdef\ker\phi$ be a lattice.  Then $\cB(n,t,\kp,\km)$
  $\lambda$-lattice-packs $\Z^n$ by $\Lambda$.
\end{theorem}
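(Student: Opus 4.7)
The plan is to mimic the proof of Theorem~\ref{th:lattotile} almost verbatim, replacing the uniqueness of representations with the $\lambda$-bounded multiplicity guaranteed by $G \overset{\lambda}{\geq} M \splt_t S$. Fix an arbitrary $\vz\in\Z^n$; I want to show that the number of distinct $\vv\in\Lambda$ with $\vz\in\vv+\cB(n,t,\kp,\km)$ is at most $\lambda$.

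First I would collect all such $\vv$. For each one, the vector $\ve\eqdef\vz-\vv$ lies in $\cB(n,t,\kp,\km)$, meaning $\ve\in(M\cup\set{0})^n$ with $\wt(\ve)\le t$. Since $\vv\in\Lambda=\ker\phi$, we get $\phi(\ve)=\phi(\vz)-\phi(\vv)=\phi(\vz)$, so every such $\ve$ is a representation of the fixed group element $\phi(\vz)\in G$ as $\ve\cdot(s_1,\dots,s_n)$, i.e., as a linear combination of at most $t$ elements of $S$ with coefficients from $M\cup\set{0}$. Different choices of $\vv$ yield different $\ve$'s (the map $\vv\mapsto\vz-\vv$ is a bijection).

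Now I invoke the hypothesis: since $G \overset{\lambda}{\geq} M \splt_t S$, the group element $\phi(\vz)$ admits at most $\lambda$ such representations. Therefore the number of admissible $\ve$, and hence of admissible $\vv$, is at most $\lambda$. Because $\vz$ was arbitrary, this is exactly the definition of $\cB(n,t,\kp,\km)$ $\lambda$-packing $\Z^n$ by $\Lambda$.

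There is no real obstacle here; the only thing to be slightly careful about is the counting correspondence between lattice points $\vv$ covering $\vz$ and vectors $\ve\in\cB(n,t,\kp,\km)$ mapping to $\phi(\vz)$, which is a bijection given by $\ve=\vz-\vv$. The rest is just unpacking the definitions of $\overset{\lambda}{\geq}$ and of $\lambda$-packing, exactly paralleling the $\lambda=1$ case in Theorem~\ref{th:lattotile}.
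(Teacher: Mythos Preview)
Your proof is correct and is precisely the analogous argument the paper has in mind; in fact the paper omits the proof entirely, stating only that it is ``exactly analogous'' to Theorem~\ref{th:lattotile}. Your write-up fills in that omission faithfully.
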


We use the probabilistic approach detailed in~\cite{vu:2002}, and
follow some of the notation there. Let $x_1,x_2,\ldots, x_N$ be
independent $\set{0,1}$ random variables. Let
$Y=Y(x_1,x_2,\ldots,x_N)$ be a polynomial of $x_1,x_2,\ldots, x_N$.
$Y$ is \emph{normal} if its coefficients are between $0$ and $1$.  A
polynomial $Y$ is \emph{simplified} if every monomial is a product of
different variables. Since we are dealing with $\set{0,1}$ random
variables, every $Y$ has a unique simplification.  Given a set $A$,
let $\partial_A(Y)$ denote the partial derivative of $Y$ with respect
to $A$, and let $\partial_A^*$ be the polynomial obtained from the
partial derivative $\partial_A (Y)$ by subtracting its constant
coefficient. Define $\bE_j^*(Y)\eqdef \max_{\abs{A}\geq j}
\bE(\partial_A^* Y)$.

\begin{theorem}\cite[Corollary~4.9]{vu:2002}\label{thm:vucon}
  For any positive constants $\alpha$ and $\beta$ and a positive
  integer $d$, there is a positive constant $C=C(d,\alpha,\beta)$ such
  that if $Y$ is a simplified normal polynomial of degree at most $d$
  and $\bE_0^*(Y)\leq N^{-\alpha}$, then $\Pr(Y\geq C) \leq
  N^{-\beta}$.
\end{theorem}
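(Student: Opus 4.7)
The plan is to prove this by the moment method: apply Markov's inequality to a high power $Y^k$ and bound $\bE(Y^k)$ using the hypothesis that all expected partial derivatives are of size at most $N^{-\alpha}$. The underlying intuition is that the quantities $\bE(\partial_A^* Y)$ constitute a complete set of moment-like statistics for a multilinear $Y$: uniform smallness of all of them forces $Y$ to concentrate. This is the core Kim--Vu polynomial concentration philosophy, of which Vu's Corollary~4.9 is a clean specialization.

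First, I would shift away the constant term. Since $Y$ is normal of degree at most $d$, its constant term $Y(\mathbf{0})$ lies in $[0,1]$. Replacing $Y$ by $\tilde{Y} \eqdef Y - Y(\mathbf{0})$ shifts the deviation threshold by at most $1$ (which we absorb into $C$), places us in the setting of a simplified normal polynomial with zero constant term, and preserves the hypothesis $\bE_0^*(\tilde{Y}) \le N^{-\alpha}$.

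Second, I would expand $\tilde{Y}^k$ into monomials and group them by support, exploiting both the multilinearity of $\tilde{Y}$ and the independence of the Boolean variables $x_i$. By a standard Kim--Vu accounting identity, $\bE(\tilde{Y}^k)$ can be written as a sum over ordered tuples of monomials drawn from $\tilde{Y}$, each contributing a product of expectations $\bE(x_i)$ over its aggregated support. Reorganizing by the shape of the support (which variables appear and with what multiplicities) allows each summand to be bounded in terms of $\bE(\partial_A^* \tilde{Y})$ for subsets $A$ of size at most $dk$, and the hypothesis then gives a uniform bound of $N^{-\alpha}$ on every such factor. A careful counting argument shows that the combinatorial prefactor depends only on $k$ and $d$, yielding
\[
\bE(\tilde{Y}^k) \le C_1(k,d)\, N^{-\alpha \lfloor k/d \rfloor}
\]
for some explicit $C_1(k,d)$ not depending on $N$.

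Finally, choosing $k$ on the order of $d(\beta+1)/\alpha$ and $C = C(d,\alpha,\beta)$ large enough to absorb both $C_1(k,d)^{1/k}$ and the constant term shift, Markov's inequality gives $\Pr(Y \ge C) \le \bE(\tilde{Y}^k)/(C-Y(\mathbf{0}))^k \le N^{-\beta}$. The main obstacle is the combinatorial bookkeeping in the second step: correctly attributing each monomial of $\tilde{Y}^k$ to a product of partial-derivative expectations of $\tilde{Y}$, and ensuring that the count of such monomials does not inflate the bound and erase the $N^{-\alpha}$ savings. This identification, and the resulting $\bE_j^*$-based moment bound, is the technical heart of Vu's polynomial concentration machinery, and is really where the genuine work happens; the Markov step and the choice of $k$ are then routine.
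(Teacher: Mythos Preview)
The paper does not prove this theorem. It is quoted verbatim as \cite[Corollary~4.9]{vu:2002} and used as a black box in the proof of Theorem~\ref{th:prob}; no argument for it appears anywhere in the paper. So there is no ``paper's own proof'' to compare your proposal against.

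As for your sketch on its own terms: the moment-method strategy you outline is indeed the Kim--Vu machinery that underlies Vu's original result, and the overall shape (shift off the constant, bound $\bE(\tilde Y^k)$ by products of $\bE(\partial_A^*\tilde Y)$, then Markov) is correct. You are right that the entire content lies in the second step, and your writeup is honest about this: the claimed inequality $\bE(\tilde Y^k)\le C_1(k,d)\,N^{-\alpha\lfloor k/d\rfloor}$ is plausible but not actually established here, and getting the exponent and the combinatorial prefactor right requires the full support-decomposition argument from Vu's paper. So what you have is a reasonable outline of Vu's proof, not a self-contained proof; but since the present paper does not attempt one either, that distinction is moot for the comparison you were asked to make.
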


We now use Theorem~\ref{thm:vucon} to show the existence of
generalized lattice packings with the desired parameters.

\begin{theorem}
  \label{th:prob}
  Let $0\leq \km\leq\kp$ with $\kp+\km\geq 1$, and $t>0$, be
  integers. Let $N$ be a sufficiently large integer such that
  $\gcd(N,\kp!)=1$, and fix $G\eqdef\Z_N$. Then for any
  $0<\epsilon<1/t$, there is a number $\lambda$ which only depends on
  $t$ and $\epsilon$, and a subset $S=\set{s_1,s_2,\dots,
    s_n}\subseteq G$ with $\frac{1}{2}N^{1/t-\epsilon}\leq n \leq
  \frac{3}{2}N^{1/t-\epsilon}$, such that $G \overset{\lambda}{\geq} M
  \splt_t S$, where $M\eqdef[-\km,\kp]^*$.
\end{theorem}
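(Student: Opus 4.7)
The plan is to use the probabilistic method together with Vu's concentration inequality (Theorem~\ref{thm:vucon}). Fix $p\eqdef N^{1/t-\epsilon-1}$ and form a random subset $S\subseteq\Z_N$ by including each element independently with probability $p$; write $x_a\in\set*{0,1}$ for the indicator of $a\in S$. For each $g\in\Z_N$ introduce the polynomial
\[ Y_g\eqdef \sum_{(T,\vc)}\prod_{a\in T}x_a, \]
where the sum ranges over pairs $(T,\vc)$ with $\varnothing\neq T\subseteq\Z_N$, $\abs{T}\leq t$, $\vc\colon T\to M$, and $\sum_{a\in T}c_a\cdot a\equiv g\pmod{N}$. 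Then $Y_g$ upper bounds the number of nontrivial $t$-term $M$-representations of $g$ from $S$. Since for each fixed $T$ there are at most $\abs{M}^t$ admissible $\vc$'s, the rescaled polynomial $\tilde Y_g\eqdef Y_g/\abs{M}^t$ is simplified, normal, and of degree at most $t$, so Theorem~\ref{thm:vucon} will apply once I control $\bE_0^*(\tilde Y_g)$.

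The core estimate bounds $\bE[\partial_A^* Y_g]$ for an arbitrary $A\subseteq\Z_N$ with $\abs{A}<t$. Fixing $\vc_A\colon A\to M$ determines $s_A=\sum_{a\in A}c_a\cdot a$; the key use of the hypothesis $\gcd(N,\kp!)=1$ is that every element of $M$ is invertible modulo $N$. Thus for each $m=\abs{T}-\abs{A}\geq 1$, the number of pairs $(T\setminus A,\,\vc|_{T\setminus A})$ realizing $g-s_A$ is at most $\abs{M}^m N^{m-1}/m!$: choose the $m$ coefficients freely, choose the first $m-1$ new elements freely in $\Z_N$, and invert the final coefficient to solve for the last element. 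Summing over $m$ and over the $\abs{M}^{\abs{A}}$ choices of $\vc_A$ yields
\[ \bE[\partial_A^* Y_g] = O\parenv*{(p\abs{M}N)^{t-\abs{A}}/N} = O\parenv*{N^{-\abs{A}/t-\epsilon(t-\abs{A})}}, \]
which is maximized at $\abs{A}=0$, so $\bE_0^*(\tilde Y_g)=O(N^{-\epsilon t})$.

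Applying Theorem~\ref{thm:vucon} with $d=t$, $\alpha=\epsilon t/2$, and $\beta=2$ produces a constant $C=C(t,\epsilon)$ (depending on $t$, $\epsilon$ and the fixed $\kp,\km$, but not on $N$) such that $\Pr(Y_g\geq C\abs{M}^t)\leq N^{-2}$. A union bound over the $N$ choices of $g$ bounds the probability that some $Y_g\geq C\abs{M}^t$ by $N^{-1}=o(1)$, while a standard Chernoff bound shows $\abs{S}\in\sparenv*{\tfrac{1}{2}N^{1/t-\epsilon},\tfrac{3}{2}N^{1/t-\epsilon}}$ with probability $1-o(1)$. Hence for $N$ large enough a realization of $S$ exists that meets the size condition and satisfies $Y_g\leq\lambda\eqdef C\abs{M}^t$ for every $g$, which is exactly $\Z_N\overset{\lambda}{\geq}M\splt_t S$.

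The main obstacle is the second paragraph: extracting an $N^{m-1}$-type upper bound (rather than the trivial $N^m$) on the number of $m$-term $M$-representations of a target in $\Z_N$, so that after summing and dividing by $N$ the exponent is strictly negative. This saving, and thus the applicability of Vu's inequality, relies crucially on the coprimality condition $\gcd(N,\kp!)=1$, which guarantees that one coefficient can always be inverted to pin down the last unknown.
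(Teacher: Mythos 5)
Your proposal is correct and follows essentially the same route as the paper: the same random model with inclusion probability $p=N^{1/t-1-\epsilon}$, the same key counting step (invertibility of the coefficients modulo $N$, via $\gcd(N,\kp!)=1$, pins down the last element and saves a factor of $N$), Vu's concentration inequality applied to normalized representation-counting polynomials, and a Chernoff-plus-union-bound finish. The only cosmetic difference is that you merge the degree-$t$ and lower-degree representations into a single normal polynomial of degree at most $t$, where the paper treats them as two polynomials $Y_g$ and $Z_g$; both normalizations are valid and yield the same conclusion.
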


\begin{IEEEproof}
Set $\alpha=\epsilon t$, $\beta=2$, and $d=t$. Denote
\[p\eqdef N^{\frac{1}{t}-1-\epsilon}.\]
We construct $S$ randomly.  For each $0\leq i < N$, let the event
$i\in S$ be independent with probability $p$. Let $x_i$ be the
indicator variable of the event $i \in S$. Then
$\abs{S}=\sum_{i=0}^{N-1}x_i$, and
\[\bE(\abs{S})=Np=N^{\frac{1}{t}-\epsilon}.\]
Using Chernoff's inequality,  one can show that
\begin{equation}\label{eq:sizeexpc}
\Pr\parenv*{\frac{1}{2}\bE(\abs{S})\leq \abs{S}\leq \frac{3}{2}\bE(\abs{S})}\geq  1- 2e^{-\bE(\abs{S})/16}.
\end{equation}

For every $g\in G$ and $0 \leq i_1 < i_1 <\cdots <i_\ell < N$, denote
\[c(g;i_1,i_2,\ldots, i_\ell)\eqdef\abs{ \set*{(a_1,a_2,\ldots,a_\ell)\in M^\ell; g=a_1i_1+a_2i_2+\cdots a_\ell i_\ell} },\]
where addition and multiplication are in $G=\Z_N$. Consider the
following random variables (which are polynomials in the indicator random
variables $x_0,\dots,x_{N-1}$),
\[Y_g \eqdef    \parenv*{ \sum_{0 \leq i_1  <\cdots <i_t  < N } {c(g;i_1,i_2,\ldots,i_t)} x_{i_1}x_{i_2}\cdots x_{i_t}}/(\kp+\km)^t,\]
and
\[Z_g \eqdef   \parenv*{\sum_{\substack{1\leq \ell \leq t-1 \\ 0 \leq i_1  <\cdots <i_\ell < N }} {c(g;i_1,i_2,\ldots,i_{\ell})} x_{i_1}x_{i_2}\cdots x_{i_\ell}}/(\kp+\km)^{t-1}.\]
Both of them are positive, and as polynomials, they are simplified,
and normal. To show that $G \overset{\lambda}{\geq} M \splt_t S$, it
suffices to show that $(\kp+\km)^tY_g+(\kp+\km)^{t-1}Z_g \leq \lambda-1$
for every $g\in G$. 

We first look at $Y_g$. Since $\gcd(N,\kp!)=1$, if we fix
$a_1,a_2,\ldots,a_t\in M^t$ and $i_1,i_2,\ldots, i_{t-1}$, then there is a
unique $i_t \in [0,N-1]$ such that
$a_1i_1+a_2i_2+\cdots+a_ti_t=g$. Hence,
\[\bE(Y_g) \leq  ((\kp+\km)^t N^{t-1} p^{t})/(\kp+\km)^t= N^{-\epsilon t} =N^{-\alpha}.\]
For the partial derivative  $\partial_A (Y_g)$ with $A=\set{j_1,j_2,\ldots, j_k} \subseteq [0,N-1]$ and $k\leq t-1$,
\[ \partial_A (Y_g)= \sum_{c_1,c_2,\ldots, c_k \in M}\parenv*{ \sum_{0\leq i_1<i_2 <\cdots i_{t-k}<N}   {c(g-c_1j_1-\cdots-c_kj_k;i_1,i_2,\ldots,i_{t-k})} x_{i_1}x_{i_2}\cdots x_{i_{t-k}}}/(\kp+\km)^t, \]
hence,
\begin{align*}
\bE(\partial_A (Y_g)) & \leq (\kp+\km)^k((\kp+\km)^{t-k}N^{t-k-1}p^{t-k}/(\kp+\km)^t) \\
& = N^{-\epsilon t +k\epsilon -k/t} < N^{-\alpha}.
\end{align*}
Applying  Theorem~\ref{thm:vucon}, there is a number $C$, depending on $t$ and $\epsilon$, such that 
\begin{equation}\label{eq:Yupbound}
\Pr(Y_g \geq C) \leq  N^{-2}.
\end{equation}

As for $Z_g$, a similar computation to the above shows that 
\[\bE(\partial_A^*(Z_g))=  O(N^{t-1-k-1} p^{t-1-k}).\]
Since 
\begin{align*}
N^{t-1-k-1} p^{t-1-k}=  N^{-\epsilon t-(\frac{1}{t}-\epsilon)(k+1)} < N^{-\alpha},
\end{align*}
we have $\bE(\partial_A^*(Z_g)) < N^{-\alpha}$ when $N$ is sufficiently large. Applying Theorem~\ref{thm:vucon} again, there is a $C'$ such that 
\begin{equation}\label{eq:Zupbound}
\Pr(Z_g\geq C') \leq N^{-2}.
\end{equation}
Denote $\lambda = C(\kp+\km)^t +C'(\kp+\km)^{t-1}+1$. Then  \eqref{eq:Yupbound}  and \eqref{eq:Zupbound} imply, via a union bound, that
\[\Pr(\textup{there exists } g \in G  \textup{ such that }  (\kp+\km)^tY_g+(\kp+\km)^{t-1}Z_g  > \lambda-1) \leq 2N^{-1}. \]
From this, together with \eqref{eq:sizeexpc}, we can see that the
random set $S$ satisfies the conditions with probability at least
$1-2N^{-1}-2e^{-\bE(\abs{S})/16}$, which is positive for large enough
$N$. Thus, such a set exists.
\end{IEEEproof}

\begin{corollary}
  \label{cor:genpack}
  Let $0\leq \km\leq\kp$ with $\kp+\km\geq 1$, and $t>0$, be
  integers. Then for any real number $\epsilon>0$, there is an integer
  $\lambda$ and infinitely many values of $n$ such that $\Z^n$ can be
  $\lambda$-lattice-packed by $\BALL$ with density
  $\delta=\Omega(n^{-\epsilon})$.
\end{corollary}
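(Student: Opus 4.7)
The plan is to apply Theorem~\ref{th:prob} with a carefully rescaled input parameter, then convert the resulting $\lambda$-splitting into a $\lambda$-lattice-packing using the theorem immediately preceding Theorem~\ref{th:prob}, and finally read off the density from $\abs{\BALL}/\vol(\Lambda)$.

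Given $\epsilon>0$, I would first set $\epsilon_0 \eqdef \epsilon/(t(t+\epsilon))$. This satisfies $0 < \epsilon_0 < 1/t$ (since $\epsilon < t+\epsilon$) and is the unique solution of $\epsilon_0 t^2/(1-\epsilon_0 t) = \epsilon$. Let $N$ range over primes exceeding $\kp$, so $\gcd(N,\kp!)=1$ is automatic and $N$ may be taken arbitrarily large. For each sufficiently large such $N$, Theorem~\ref{th:prob} produces a subset $S = \{s_1,\dots,s_n\}\subseteq \Z_N$ of size $n$ with $\tfrac{1}{2}N^{1/t-\epsilon_0}\leq n\leq \tfrac{3}{2}N^{1/t-\epsilon_0}$ and $\Z_N \overset{\lambda}{\geq} M \splt_t S$, where $\lambda$ depends only on $t$ and $\epsilon_0$, hence only on $t$ and $\epsilon$.

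Applying the theorem immediately preceding Theorem~\ref{th:prob} then yields a lattice $\Lambda \eqdef \ker\phi\subseteq\Z^n$ (with $\phi(\vx) = \vx\cdot(s_1,\dots,s_n)$) for which $\BALL$ $\lambda$-lattice-packs $\Z^n$. Because $\Z^n/\Lambda$ embeds in $\Z_N$, I have $\vol(\Lambda) \leq N$, and so
\[\delta = \frac{\abs{\BALL}}{\vol(\Lambda)} \geq \frac{\abs{\BALL}}{N}.\]
Since $\abs{\BALL} = \Theta(n^t)$ for fixed $t$ and $\kp+\km\geq 1$, and since the lower bound on $n$ rearranges to $N\leq (2n)^{t/(1-\epsilon_0 t)}$, a direct computation gives $\delta = \Omega\bigl(n^{t - t/(1-\epsilon_0 t)}\bigr) = \Omega\bigl(n^{-\epsilon_0 t^2/(1-\epsilon_0 t)}\bigr) = \Omega(n^{-\epsilon})$ by the defining choice of $\epsilon_0$. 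Since infinitely many primes exceed $\kp$, this supplies infinitely many values of $n$.

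The only real subtlety is the rescaling of the input parameter. Theorem~\ref{th:prob} already has a loss built into it: $\abs{S}$ is of order $N^{1/t-\epsilon_0}$ rather than the ideal $N^{1/t}$, so feeding $\epsilon$ in directly would only yield the weaker density $\Omega\bigl(n^{-\epsilon t^2/(1-\epsilon t)}\bigr)$. Solving the single-variable equation for $\epsilon_0$ precompensates for this inherent loss, and once this is done the remainder of the argument is a routine volume computation together with the observation that $\lambda$ from Theorem~\ref{th:prob} is independent of $N$, hence of $n$.
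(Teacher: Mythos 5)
Your proposal is correct and follows essentially the same route as the paper: the rescaled parameter $\epsilon_0=\epsilon/(t(t+\epsilon))$ is literally the paper's choice $\epsilon'=\epsilon/(\epsilon t+t^2)$, and the rest is the same application of Theorem~\ref{th:prob}, the preceding splitting-to-lattice theorem, and the density bound $\delta\geq\abs{\BALL}/N$ with $N=\Theta(n^{t+\epsilon})$. Your write-up just makes explicit the choice of $N$ prime and the exponent arithmetic that the paper leaves implicit.
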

\begin{IEEEproof}
  Fix $\epsilon'\eqdef\frac{\epsilon}{\epsilon t+ t^2}$, and observe
  that $\epsilon'<1/t$. Use Theorem~\ref{th:prob} with $\epsilon'$,
  noting that $N=\Theta(n^{\epsilon+t})$, to obtain a lattice
  $\lambda$-packing of $\BALL$ with density
  \[ \delta \geq \frac{\abs{\BALL}}{\abs{G}}=\frac{\sum_{i=0}^t \binom{n}{i}(\kp+\km)^i}{N}=\Omega(n^{-\epsilon}).\]
\end{IEEEproof}

As a final comment on the matter, we observe that a tedious
calculation shows that in the above corollary
$\lambda=O(\epsilon^{-t})$ -- a calculation which we omit.


\section{Constructions of Lattice Coverings}
\label{sec:covering}

We switch gears in this section, and focus on covering instead of
packing. We first argue that using known techniques from the theory of
covering codes in the Hamming metric, we can show the existence of
non-lattice coverings of $\Z^n$ by $\BALL$. However, these have a high
density of $\Omega(n)$. We then provide a product construction to
obtain a lattice covering by $\BALL$ with density $O(1)$.

Fixing an integer $\ell\in\N$, we use the same argument as the one given
in \cite[Section~12.1]{CohHonLitLob97} to construct a covering
code $C\subseteq\Z_\ell^n$, of size
\[ \abs*{C}=\ceil*{\frac{n\ell^n \ln \ell}{\abs*{\BALL}}}.\]
We can then translate this covering of $\Z_\ell^n$ by $\BALL$ to a covering
of $\Z^n$ by using the same idea as Theorem~\ref{th:linearcode}, and
defining $C'\eqdef\set*{ \vx\in\Z^n ; (\vx\bmod \ell)\in C}$. However,
the density of the resulting covering is
\[ \delta= \frac{\abs*{C}\cdot\abs*{\BALL}}{\ell^n} = \Omega(n).\]
We therefore proceed to consider more efficient coverings using the
product construction whose details follow.

\begin{theorem}
  \label{con-reccovering}
  Suppose that there exist a finite Abelian group $G$ and a subset $S
  \subseteq G$ such that $G \leq M \splt_1 S$. Let $t>0$ be an
  integer, and denote
  \[S^{(t)}\triangleq\set*{(s,0,0,\dots, 0) ; s \in S} \cup \set*{(0,s,0,\dots,0) ; s\in S} \cup \cdots \cup\set*{(0,0,0,\dots,s) ; s \in S}.\]
  Then $G^{t}\leq  M \splt_t S^{(t)}$.
\end{theorem}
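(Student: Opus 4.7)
The plan is to prove the statement directly by decomposing elements of $G^t$ coordinate by coordinate and invoking the hypothesis $G\leq M\splt_1 S$ once per coordinate.

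First I would fix an arbitrary element $\vg=(g_1,g_2,\dots,g_t)\in G^t$ and apply the assumption $G\leq M\splt_1 S$ to each coordinate separately. This produces, for every $i\in[1,t]$, some $a_i\in M\cup\set{0}$ and some $s^{(i)}\in S$ such that
\[
g_i = a_i\cdot s^{(i)}.
\]
I would then assemble the corresponding element of $S^{(t)}$: writing $\vv_i\eqdef(0,\dots,0,s^{(i)},0,\dots,0)\in S^{(t)}$, the element whose $i$-th block equals $s^{(i)}$ and whose other blocks are zero. By construction, each $\vv_i$ belongs to the $i$-th block of $S^{(t)}$ in the definition of $S^{(t)}$.

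Next I would write
\[
\vg = \sum_{i=1}^{t} a_i\, \vv_i,
\]
which is manifestly a linear combination of elements of $S^{(t)}$ with coefficients in $M\cup\set{0}$. The coefficient vector corresponding to this combination has length $t\cdot\abs{S}$, and among all its entries, at most $t$ are nonzero, since each block of $S^{(t)}$ contributes at most one nonzero coefficient $a_i$. Consequently its Hamming weight is at most $t$, which is exactly the condition required by Definition~\ref{def:split}(2) for $G^t\leq M\splt_t S^{(t)}$.

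There is really no serious obstacle: the construction $S^{(t)}$ was chosen precisely so that each coordinate of $G^t$ can be handled independently. The only thing to verify carefully is the bookkeeping for the weight condition -- namely, that assigning $a_i$ to a single coordinate in the $i$-th block and zero elsewhere yields total Hamming weight at most $t$ -- and that coefficients equal to $0$ simply reduce the weight further, which is permitted by the ``$\leq t$'' in the definition. Since $\vg$ was arbitrary, this establishes $G^t\leq M\splt_t S^{(t)}$.
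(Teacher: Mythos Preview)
Your proof is correct and essentially identical to the paper's own argument: both decompose an arbitrary $\vg=(g_1,\dots,g_t)\in G^t$ coordinatewise using $G\leq M\splt_1 S$, then reassemble it as $\sum_{i=1}^t a_i\vv_i$ with $\vv_i\in S^{(t)}$ and $a_i\in M\cup\set{0}$. The only difference is cosmetic notation and your slightly more explicit bookkeeping on the Hamming weight.
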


\begin{IEEEproof}
  For any element $g=(g_1,g_2,\ldots,g_t)\in G^{t}$, since $G\leq M
  \splt_1 S$, for each $1\leq i \leq t$, there are $s_i\in S$ and $c_i
  \in M\cup \set{0}$ such that $g_i=c_i\cdot s_i$.  Hence,
  \[g=c_1(s_1,0,0,\ldots,0)+c_2(0,s_2,0,\ldots,0)+\cdots+c_t(0,0,0,\ldots,s_t).\]
  That is, $g$ can be written as a linear combination of $t$ elements
  of $S$ with coefficients from $M \cup \set{0}$, and so, $G^{t}\leq M
  \splt_t S^{(t)}$.
\end{IEEEproof}

We can now construct a lattice covering, using the previous theorem.

\begin{corollary}
  \label{cor:cover}
Let $\opsi(x)$ be the largest prime not larger than $x$, and denote
$p\eqdef\opsi(\kp+\km+1)$. Let $0\leq \km\leq \kp$, with $\kp+\km\geq
1$, and $t>0$, be integers. Define $M\eqdef [-\km,\kp]^*$. Then for
any integer $m>0$, there exists $S\subseteq (\Z_{p^m})^t$,
$\abs{S}=t\cdot\frac{p^m-1}{p-1}$, such that $(\Z_{p^m})^t\leq M
\splt_t S$, and thereby, a lattice covering of $\Z^n$, $n=\abs{S}$,
with density
\[\delta = \frac{ \sum_{i=0}^{t} \binom{n}{i} (\kp+\km)^i }{  ( n(p-1)/t+1  )^t  }=\frac{(t(\kp+\km))^t}{t!(p-1)^t}+o(1).\]
\end{corollary}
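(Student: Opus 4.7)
The plan is to apply Theorem~\ref{con-reccovering} with $G = \Z_{p^m}$, after first constructing a subset $S_0 \subseteq \Z_{p^m}$ of size $(p^m-1)/(p-1)$ that satisfies $\Z_{p^m} \leq M \splt_1 S_0$. Once such an $S_0$ is in hand, Theorem~\ref{con-reccovering} immediately yields $S \subseteq (\Z_{p^m})^t$ of size $n = t(p^m-1)/(p-1)$ with $(\Z_{p^m})^t \leq M \splt_t S$, and Theorem~\ref{th:lattotile}(2) delivers the claimed lattice covering of $\Z^n$. The density calculation is then routine: since $\vol(\Lambda) = \abs{(\Z_{p^m})^t} = p^{mt}$ and $p^m = n(p-1)/t + 1$, we get $\vol(\Lambda) = (n(p-1)/t+1)^t$, and the asymptotic $(t(\kp+\km))^t/(t!(p-1)^t) + o(1)$ follows from the dominant term $\binom{n}{t}(\kp+\km)^t \sim n^t(\kp+\km)^t/t!$ in the numerator together with $(n(p-1)/t + 1)^t \sim n^t(p-1)^t/t^t$ in the denominator.

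The heart of the argument is therefore the construction of $S_0$. I would build $S_0^{(m)}$ by induction on $m$, maintaining the \emph{stronger} invariant that every nonzero $g \in \Z_{p^m}$ can be written as $g = c \cdot s$ for some $s \in S_0^{(m)}$ and some $c \in M$ with $\gcd(c,p) = 1$. For the base case $m = 1$, take $S_0^{(1)} = \set*{1}$: given $g \in \set*{1,\dots,p-1}$, choose $c = g$ if $g \leq \kp$ and $c = g-p$ otherwise; the hypothesis $\kp + \km + 1 \geq p$ guarantees $c \in M$, and the bound $\abs{c} \leq p-1$ forces $\gcd(c,p) = 1$. For the inductive step, define
\[
S_0^{(m+1)} \eqdef \set*{s + jp^m ; s \in S_0^{(m)},\ j \in \set*{0,1,\dots,p-1}} \cup \set*{p^m},
\]
whose size is $p \cdot \abs{S_0^{(m)}} + 1 = (p^{m+1}-1)/(p-1)$. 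Given nonzero $g \in \Z_{p^{m+1}}$, uniquely decompose $g = g_0 + kp^m$ with $g_0 \in [0,p^m-1]$ and $k \in [0,p-1]$. If $g_0 = 0$, repeat the base-case trick to find $c \in M$ with $c \equiv k \pmod{p}$ and $\gcd(c,p) = 1$, and take $s' = p^m$, so $c \cdot p^m \equiv kp^m = g \pmod{p^{m+1}}$. If $g_0 \neq 0$, the inductive hypothesis produces $c \in M$ with $\gcd(c,p) = 1$ and $s \in S_0^{(m)}$ such that $g_0 = cs$ in $\Z_{p^m}$; lifting $s$ to $[0, p^m-1]$ and writing $cs \equiv g_0 + k''p^m \pmod{p^{m+1}}$ for some $k'' \in [0,p-1]$, I would then solve $c(s + jp^m) \equiv g \pmod{p^{m+1}}$ for $j \in [0,p-1]$, which reduces to the linear congruence $cj \equiv k - k'' \pmod{p}$ --- uniquely solvable precisely because $\gcd(c,p) = 1$.

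The main obstacle is sustaining the coprimality condition $\gcd(c,p) = 1$ through every inductive step; without it, the lifting congruence $cj \equiv k - k'' \pmod{p}$ may be unsolvable. The saving observation is elementary: $p \leq \kp + \km + 1$ implies that the integer interval $[-\km, \kp]$ contains a representative of every residue modulo $p$ whose absolute value is at most $p-1$, and any such representative is automatically coprime to $p$. Once the invariant is established, the corollary follows by chaining Theorem~\ref{con-reccovering}, Theorem~\ref{th:lattotile}(2), and the density calculation outlined above.
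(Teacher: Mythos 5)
Your proposal is correct and follows essentially the same route as the paper: build a complete $1$-splitting of $\Z_{p^m}$ by $M$ with a splitter set of size $(p^m-1)/(p-1)$, feed it into Theorem~\ref{con-reccovering}, and finish with the density computation via $\vol(\Lambda)=\abs{(\Z_{p^m})^t}=p^{mt}=(n(p-1)/t+1)^t$. The only difference is that the paper imports the $1$-splitting ready-made from \cite[Construction~1]{Sch12}, whereas you construct it explicitly by induction on $m$; your digit-lifting argument, with the invariant that the coefficient $c$ can always be chosen coprime to $p$ (guaranteed by $p\leq\kp+\km+1$), is a correct self-contained substitute for that citation.
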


\begin{IEEEproof}
 According to \cite[Construction 1]{Sch12}, there is a subset
 $A\subseteq \Z_{p^m}$ of size $\frac{p^m-1}{p-1}$ such that
 $\Z_{p^m}=[-\km,p-1-\km]^*\splt_1 A$. We may apply
 Theorem~\ref{con-reccovering} to obtain a subset $S=A^{(t)}\subseteq
 (\Z_{p^m})^t$ of size $t\cdot\frac{p^m-1}{p-1}$ such that
 $(\Z_{p^m})^t\leq M\splt_t S$. The calculation of the density of the
 resulting lattice covering is straightforward.
\end{IEEEproof}



\section{Conclusion}
\label{sec:conclusion}

Motivated by coding for integer vectors with limited-magnitude errors,
we provided several constructions of packings of $\Z^n$ by $\BALL$ for
various parameters. These are summarized in
Table~\ref{tab:summary}. While the parameter ranges of the
constructions sometimes overlap, and perhaps result in equal or
inferior asymptotic density, having more constructions allows for more
choices for fixed values of the parameters.

One main goal was to construct lattice packings, analogous to linear
codes, as these are generally easier to analyze, encode, and
decode. Thus, except for one case, all constructions we provide
lattices. The main tool in constructing these is the connection
between lattice packings of $\BALL$ and $t$-splittings of Abelian
groups. The other important goal was to have asymptotic packing
density that is non-vanishing. This is achieved in many of the cases.

We also discussed $\lambda$-packing, which allows for a small overlap
between the translates of $\BALL$ centered at the lattice points. This
is useful for list-decoding setting with a list size of $\lambda$. The
result we obtain is non-constructive, and it provides a trade-off
between the list size and the packing density. Finally, we also
addressed the problem of lattice-covering of $\Z^n$ by $\BALL$,
showing using the product construction, that there exist such
coverings with asymptotic constant density.

The results still leave numerous open questions, of which we mention
but a few:
\begin{enumerate}
\item
  Constructions for packings of $\BALL$ with $t\geq 3$, $\kp\geq 2$,
  and non-vanishing asymptotic density are still unknown.
\item
  Whether asymptotic density of $1$ is attainable for all parameters
  is still an open questions. Such lattice packings would be analogous
  to asymptotically perfect codes.
\item
  In the asymptotic regime of $t=\Theta(n)$, all of the constructions
  in this paper produce packings with vanishing asymptotic rates. Such
  families of packings are analogous to good codes in the Hamming
  metric, and their existence and constructions would be most welcome.
\item
  Efficient decoding algorithms are missing for most of the cases. In
  the asymptotic regime of constant $t$,
  $\abs{\BALL}=\Theta(n^t)$. Thus, we are looking for non-trivial
  decoding algorithms, whose run-time is $o(n^t)$.
\item
  We would also like to find constructive versions of the non-constructive
  proofs for $\lambda$-packings, and covering lattices.
\end{enumerate}

\begin{table*}
  \caption{A summary of the results}
  \label{tab:summary}
  {\renewcommand{\arraystretch}{1.2}
  \begin{tabular}{cccccll}
    \hline\hline
    Type    & $t$ & $\kp$ & $\km$ & density & Location & Comment \\
    \hline
    Packing & any & 1 & 0 & $\frac{1}{t!}+o(1)$ & Corollary~\ref{cor:lc} & via BCH codes \\
    Packing & any & any & any & $\Theta(n^{t-\ceil{2t(1-{1}/p)}})$ & Corollary~\ref{cor:lc} & via BCH codes, $p\geq \kp+\km+1$ prime\\
    Packing & any & 1 & 0 & $\frac{1}{t!}+o(1)$ & Corollary~\ref{cor:btsets} & via $B_t[N;1]$ sets \\
    Packing & any & 1 & 1 & $\frac{2^t}{t!(2t+1)}+o(1)$ & Corollary~\ref{cor:kpm1} & via $B_t[N;1]$ sets \\
    Packing & 2 & 1 & 0 & $1-o(1)$ & Corollary~\ref{cor:perparata} & via Preparata codes, non-lattice packing \\
    Packing & 2 & 1  & 1 & $\frac{1}{2}+o(1)$ & Corollary~\ref{cor:quasiperfectlc} & via quasi-perfect linear codes\\
    Packing & 2 & 2 & 0 & $\frac{1}{2}+o(1)$ & Corollary~\ref{cor:quasiperfectlc} & via quasi-perfect linear codes\\
    Packing & 2 & $\leq 2$ & $\leq 2$ & $\frac{1}{8(2(\kp+\km)+1)}+o(1)$ & Corollary~\ref{cor:t2sidon} & via $2$-fold Sidon sets \\
    Packing & 2 & $\leq 3$ & $\leq 3$ & $\Omega(c^{-\sqrt{\ln n}})$ & Corollary~\ref{cor:behruz} & via Behrend's and Ruzsa's methods \\
    $\lambda$-Packing & any & any & any & $\Omega(n^{-\epsilon})$ & Corollary~\ref{cor:genpack} & $\lambda=O(\epsilon^{-t})$ \\
    Covering & any & any & any & $\frac{(t(\kp+\km))^t}{t!(p-1)^t}+o(1)$ & Corollary~\ref{cor:cover} & $p\leq \kp+\km+1$ prime \\
    \hline\hline
  \end{tabular}
  }
\end{table*}


\begin{thebibliography}{10}
\bibitem{aly2007quantum}
S.~A. Aly, A.~Klappenecker, and P.~K. Sarvepalli, ``On quantum and classical
  {BCH} codes,'' \emph{IEEE Trans.~Inform.~Theory}, vol.~53, no.~3, pp.
  1183--1188, Mar. 2007.

\bibitem{BakHarPin01}
R.~C. Baker, G.~Harman, and J.~Pintz, ``The difference between consecutive
  primes, {II},'' \emph{Proceedings of the London Mathematical Society},
  vol.~83, no.~3, pp. 532--562, Nov. 2001.

\bibitem{Behrend46}
F.~A. Behrend, ``On sets of integers which contain no three terms in
  arithmetical progression,'' \emph{Proceedings of the National Academy of
  Sciences of the United States of America}, vol.~32, no.~12, p. 331, 1946.

\bibitem{BoseChowla}
R.~C. Bose and S.~Chowla, ``Theorems in the additive theory of numbers,''
  \emph{Comment.~Math.~Helv.}, vol.~37, pp. 141--147, 1962.

\bibitem{BuzEtz12}
S.~Buzaglo and T.~Etzion, ``Tilings with {$n$}-dimensional chairs and their
  applications to asymmetric codes,'' \emph{IEEE Trans.~Inform.~Theory},
  vol.~59, pp. 1573--1582, Mar. 2013.

\bibitem{CasSchBohBru10}
Y.~Cassuto, M.~Schwartz, V.~Bohossian, and J.~Bruck, ``Codes for asymmetric
  limited-magnitude errors with applications to multilevel flash memories,''
  \emph{IEEE Trans.~Inform.~Theory}, vol.~56, no.~4, pp. 1582--1595, Apr. 2010.

\bibitem{CSW2014}
Z.~Chen, I.~E. Shparlinski, and A.~Winterhof, ``Covering sets for
  limited-magnitude errors,'' \emph{IEEE Trans.~Inform.~Theory}, vol.~60,
  no.~9, pp. 5315--5321, Sep. 2014.

\bibitem{CohHonLitLob97}
G.~Cohen, I.~Honkala, S.~Litsyn, and A.~Lobstein, \emph{Covering Codes}.\hskip
  1em plus 0.5em minus 0.4em\relax North-Holland, 1997.

\bibitem{DanevDodunekov2008}
D.~Danev and S.~Dodunekov, ``A family of ternary quasi-perfect {BCH} codes,''
  \emph{Designs, Codes and Cryptography}, vol.~49, pp. 265--271, 2008.

\bibitem{GashkovSidelnikov1986}
I.~Gashkov and V.~Sidel'nikov, ``Linear ternary quasiperfect codes that correct
  double errors,'' \emph{Problemy Peredachi Informatsii}, vol.~22, no.~4, pp.
  43--48, 1986.

\bibitem{HamSte84}
W.~Hamaker and S.~Stein, ``Combinatorial packing of {$R^3$} by certain error
  spheres,'' \emph{IEEE Trans.~Inform.~Theory}, vol.~30, no.~2, pp. 364--368,
  Mar. 1984.

\bibitem{HicSte86}
D.~Hickerson and S.~Stein, ``Abelian groups and packing by semicrosses,''
  \emph{Pacific J.~Math.}, vol. 122, no.~1, pp. 95--109, 1986.

\bibitem{JaiFarSchBru20}
S.~Jain, F.~Farnoud, M.~Schwartz, and J.~Bruck, ``Coding for optimized writing
  rate in {DNA} storage,'' in \emph{Proceedings of the 2020 IEEE International
  Symposium on Information Theory (ISIT2020), Los Angeles, CA, USA}, Jun. 2020.

\bibitem{Klove2016}
T.~Kl{\o}ve, ``Two constructions of covering sets for limited-magnitude
  errors,'' \emph{IEEE Trans.~Inform.~Theory}, vol.~62, no.~3, pp. 1177--1182,
  Mar. 2016.

\bibitem{KloLuoNayYar11}
T.~Kl{\o}ve, J.~Luo, I.~Naydenova, and S.~Yari, ``Some codes correcting
  asymmetric errors of limited magnitude,'' \emph{IEEE Trans.~Inform.~Theory},
  vol.~57, no.~11, pp. 7459--7472, Nov. 2011.

\bibitem{KloveSchwartz2014}
T.~Kl{\o}ve and M.~Schwartz, ``Linear covering codes and error-correcting codes
  for limited-magnitude errors,'' \emph{Designs, Codes and Cryptography},
  vol.~73, pp. 329--354, 2014.

\bibitem{KuzVin93}
A.~V. Kuznetsov and A.~J.~H. Vinck, ``A coding scheme for single peak-shift
  correction in {$(d,k)$}-constrained channels,'' \emph{IEEE
  Trans.~Inform.~Theory}, vol.~39, no.~4, pp. 1440--1450, Jul. 1993.

\bibitem{Lazebnik2003}
F.~Lazebnik and J.~Verstra{\"e}te, ``On hypergraphs of girth five,''
  \emph{Elec.~J.~of Comb.}, vol.~10, pp. R25--R25, 2003.

\bibitem{LevVin93}
V.~I. Levenshtein and A.~J.~H. Vinck, ``Perfect {$(d,k)$}-codes capable of
  correting single peak-shifts,'' \emph{IEEE Trans.~Inform.~Theory}, vol.~39,
  no.~2, pp. 656--662, Mar. 1993.

\bibitem{LiHelleseth2016}
C.~Li and T.~Helleseth, ``Quasi-perfect linear codes from planar and apn
  functions,'' \emph{Cryptogr.~Commun.}, vol.~8, pp. 215--227, 2016.

\bibitem{Preparata1968}
F.~Preparata, ``A class of optimum nonlinear double-error- correcting codes,''
  \emph{Inform.~and Control}, vol.~13, pp. 378--400, 1968.

\bibitem{Ruzsa93}
I.~Z. Ruzsa, ``Solving a linear equation in a set of integers {I},'' \emph{Acta
  Arithmetica}, vol.~65, no.~3, pp. 259--282, 1993.

\bibitem{Sch12}
M.~Schwartz, ``Quasi-cross lattice tilings with applications to flash memory,''
  \emph{IEEE Trans.~Inform.~Theory}, vol.~58, no.~4, pp. 2397--2405, Apr. 2012.

\bibitem{Sch14}
------, ``On the non-existence of lattice tilings by quasi-crosses,''
  \emph{European J.~of Combin.}, vol.~36, pp. 130--142, Feb. 2014.

\bibitem{Ste67}
S.~Stein, ``Factoring by subsets,'' \emph{Pacific J.~Math.}, vol.~22, no.~3,
  pp. 523--541, 1967.

\bibitem{Ste84}
------, ``Packings of {$R^n$} by certain error spheres,'' \emph{IEEE
  Trans.~Inform.~Theory}, vol.~30, no.~2, pp. 356--363, Mar. 1984.

\bibitem{Ste90}
------, ``The notched cube tiles {$\mathbb{R}^n$},'' \emph{Discrete Math.},
  vol.~80, no.~3, pp. 335--337, 1990.

\bibitem{SteSza94}
S.~Stein and S.~Szab{\'o}, \emph{Algebra and Tiling}.\hskip 1em plus 0.5em
  minus 0.4em\relax The Mathematical Association of America, 1994.

\bibitem{Tam98}
U.~Tamm, ``Splittings of cyclic groups and perfect shift codes,'' \emph{IEEE
  Trans.~Inform.~Theory}, vol.~44, no.~5, pp. 2003--2009, Sep. 1998.

\bibitem{Tam05}
------, ``On perfect integer codes,'' in \emph{Proceedings of the 2005 IEEE
  International Symposium on Information Theory (ISIT2005), Adelaide, SA,
  Australia}, Sep. 2005, pp. 117--120.

\bibitem{TaoVu06}
T.~Tao and V.~Vu, \emph{Additive Combinatorics}.\hskip 1em plus 0.5em minus
  0.4em\relax Cambridge University Press, 2006.

\bibitem{vu:2002}
V.~H. Vu, ``Concentration of non-{L}ipschitz functions and applications,''
  \emph{Random Structures and Algorithms}, vol.~20, no.~3, pp. 262--316, 2002.

\bibitem{WeiSchwartz}
H.~Wei and M.~Schwartz, ``On tilings of asymmetric limited-magnitude balls,''
  \emph{submitted}.

\bibitem{YarKloBos13}
S.~Yari, T.~Kl{\o}ve, and B.~Bose, ``Some codes correcting unbalanced errors of
  limited magnitude for flash memories,'' \emph{IEEE Trans.~Inform.~Theory},
  vol.~59, no.~11, pp. 7278--7287, Nov. 2013.

\bibitem{YeZhaZhaGe20}
Z.~Ye, T.~Zhang, X.~Zhang, and G.~Ge, ``Some new results on splitter sets,''
  \emph{IEEE Trans.~Inform.~Theory}, vol.~66, no.~5, pp. 2765--2776, May 2020.

\bibitem{ZhaGe16}
T.~Zhang and G.~Ge, ``New results on codes correcting single error of limited
  magnitude for flash memory,'' \emph{IEEE Trans.~Inform.~Theory}, vol.~62,
  no.~8, pp. 4494--4500, Aug. 2016.

\bibitem{ZhaGe18}
------, ``On the nonexistence of perfect splitter sets,'' \emph{IEEE
  Trans.~Inform.~Theory}, vol.~64, no.~10, pp. 6561--6566, Oct. 2018.

\bibitem{ZhaZhaGe17}
T.~Zhang, X.~Zhang, and G.~Ge, ``Splitter sets and {$k$}-radius sequences,''
  \emph{IEEE Trans.~Inform.~Theory}, vol.~63, no.~12, pp. 7633--7645, Dec.
  2017.

\end{thebibliography}
\end{document}